\newcommand{\ket}[1]{\text{$ | #1 \rangle $}}
\newcommand{\bra}[1]{\text{$ \langle #1 | $}}
\newcommand{\tr}{\mathrm{Tr}}
\newtheoremstyle{note}
  {\topsep/2}              	% ABOVE SPACE
  {\topsep/2}            	% BELOW SPACE
  {}                        % BODY FONT
  {\parindent}             	% INDENT (empty value is the same as 0pt)
  {\itshape}                % HEAD FONT
  {.---}                    % HEAD PUNCTUATION
  {0pt}                     % HEAD SPACE
  {\thmname{#1}\thmnumber{ \itshape#2}\thmnote{ (#3)}} % CUSTOM-HEAD-SPEC
\newtheorem{theorem}{Theorem}
\newtheorem{lemma}{Lemma}
\theoremstyle{definition}
\theoremstyle{remark}
\newsavebox{\@brx}
\newcommand{\llangle}[1][]{\savebox{\@brx}{\(\m@th{#1\langle}\)}%
  \mathopen{\copy\@brx\kern-0.5\wd\@brx\usebox{\@brx}}}
\newcommand{\rrangle}[1][]{\savebox{\@brx}{\(\m@th{#1\rangle}\)}%
  \mathclose{\copy\@brx\kern-0.5\wd\@brx\usebox{\@brx}}}
\begin{document}
\title{Estimating many properties of a quantum state via quantum reservoir processing}

\author{Yinfei Li}
\affiliation{Key Laboratory of Advanced Optoelectronic Quantum Architecture and Measurement of Ministry of Education, School of Physics, Beijing Institute of Technology, Beijing 100081, China}

\author{Sanjib Ghosh}
\email{sanjibghosh@baqis.ac.cn}
\affiliation{Beijing Academy of Quantum Information Sciences, Beijing 100193, China}

\author{Jiangwei Shang}
\email{jiangwei.shang@bit.edu.cn}
\affiliation{Key Laboratory of Advanced Optoelectronic Quantum Architecture and Measurement of Ministry of Education, School of Physics, Beijing Institute of Technology, Beijing 100081, China}

\author{Qihua Xiong}
\affiliation{Beijing Academy of Quantum Information Sciences, Beijing 100193, China}
\affiliation{State Key Laboratory of Low-Dimensional Quantum Physics and Department of Physics, Tsinghua University, Beijing 100084, China}

\author{Xiangdong Zhang}
\affiliation{Key Laboratory of Advanced Optoelectronic Quantum Architecture and Measurement of Ministry of Education, School of Physics, Beijing Institute of Technology, Beijing 100081, China}

% %
\date{\today}
% %

%%%%%%
\begin{abstract}
Estimating properties of a quantum state is an indispensable task in various applications of quantum information processing. To predict properties in the post-processing stage, it is inherent to first perceive the quantum state with a measurement protocol and store the information acquired.
In this work, we propose a general framework for constructing classical approximations of arbitrary quantum states with quantum reservoirs.
A key advantage of our method is that only a single local measurement setting is required for estimating arbitrary properties, while most of the previous methods need exponentially increasing number of measurement settings.
To estimate $M$ properties simultaneously, the size of the classical approximation scales as $\ln M$.
Moreover, this estimation scheme is extendable to higher-dimensional systems and hybrid systems with non-identical local dimensions, which makes it exceptionally generic.
We support our theoretical findings with extensive numerical simulations. 
\end{abstract}

\maketitle

%%%%%%
\section{Introduction}
Estimating properties of a quantum state plays a central role in the implementation of various quantum technologies, such as quantum computing, quantum communication and quantum sensing.
This highlights that extracting information from a quantum system to a classical machine lies at the heart of quantum physics~\cite{Gebhart2023}. The prominent technique for this task, quantum tomography, studies the reconstruction methods of density matrix, which captures all the information of a quantum system. However, the curse of dimensionality has emerged with the advent of the noisy intermediate-scale quantum (NISQ) era~\cite{Preskill2018}, which renders it infeasible to obtain a complete description of quantum systems with a large number of constituents. Moreover, a full description is often superfluous in tasks where only key properties are relevant. As a consequence, the concept of shadow tomography is proposed to focus on predicting certain properties of a quantum system~\cite{Aaronson10.1145/3188745.3188802}.

A particularly important progress in the study of shadow tomography is the advancement of randomized measurements~\cite{Marco2019,Huang2020,PhysRevA.99.052323}, the virtue of which is highlighted as
\textit{``Measure first, ask questions later''} \cite{RandMeasTb}. The randomized measurement protocols proposed by Huang, Kueng and Preskill construct approximate representations of the quantum system, namely classical shadows, via Pauli group and Clifford group measurements~\cite{Huang2020}. The single-snapshot variance upper bound of classical shadows is determined by the so-called shadow norm, which is optimal in the worst-case scenario. Nevertheless, variants of randomized measurements can achieve a better performance in specific cases~\cite{RSPRXQuantum.2.030348,PhysRevLett.127.200501,Ham_PhysRevResearch.4.013054}. For instance, the Hamiltonian-driven shadow tomography achieves a higher efficiency in predicting diagonal observables~\cite{Ham_PhysRevResearch.4.013054}.
In addition, the readout noise present in the measurement protocol can be further suppressed by constructing classical shadows with positive operator-valued measures (POVMs)~\cite{SdPOVM_PhysRevLett.129.220502,POVM2_PhysRevLett.130.100801,POVM3_PRXQuantum.2.040342}.

The classical shadows are highly efficient in the estimation of various properties in the post-processing phase, the benefits of which extend to entanglement detection~\cite{Entdect_PhysRevLett.125.200501}, characterization of topological order~\cite{TPorder_doi:10.1126/sciadv.aaz3666}, machine learning for many-body problems~\cite{MLfMB_Huang2022}, etc. 
However, the randomized measurements protocols pose a challenge in experiments due to the need for exponentially increasing measurement settings to achieve an arbitrary accuracy. Hence, various techniques are introduced to tackle this problem~\cite{Derand_PhysRevLett.127.030503,Single_SettingPRXQuantum.3.040310,PhysRevX.13.011049,mcginley2022shadow}.
Moreover, the theoretical results are based on the fact that multi-qubit Clifford groups are unitary 3-designs~\cite{DesignPhysRevA.96.062336}, which is not the case for arbitrary qudit systems. The generalization of these results to higher-dimensional systems typically require complex unitary ensembles that are relatively hard to implement~\cite{QuditShadow,SOSPS_grier2022sampleoptimal}. 
Therefore, a general method for direct estimation with a single measurement setting is highly desirable.

Recently, quantum neural networks~\cite{schuld2014,biamonte2017,schuld2021} are widely studied as promising artificial neural networks due to their enhanced information feature space supported by the exponentially large Hilbert space~\cite{havlivcek2019,abbas2021}. Unlike traditional computing frameworks, neural networks learn to perform complex tasks based on training rather than predefined algorithms or strategies~\cite{jain1996}. 
With the capacity to produce data that displays atypical statistical patterns, quantum neural networks have the potential to outperform their classical counterparts~\cite{biamonte2017}.
However, training a quantum neural network can be equally hard~\cite{cerezo2022}. Indeed, it has been shown that training of quantum neural networks could be exceptionally difficult owing to the barren plateaus or far local minima in the training landscapes~\cite{McClean2018,cerezo2021,bittel2021,wang2021}. This is the reason that quantum neural networks are often limited to shallow circuit depths or small number of qubits. A trending line of research that circumvents this issue is quantum reservoir processing~(QRP) or reservoir computing~\cite{Ghosh2019,QRSP_PhysRevLett.123.260404,review_Nakajima2020,PhysRevX.11.041062,ReservoirComputing}, which studies the quantum analogy of recurrent networks. Note that in this context, `reservoir' refers to a type of neural network. In QRP, training is completely moved out of the main network to a single output layer, such that the training becomes a linear regression eliminating the possibility of producing barren plateaus or local minima~\cite{Ghosh2019}. Such a quantum neural network retains its quantum enhanced feature space while being trainable via a fast and easy mechanism.

In this work we present a shadow estimation scheme on the QRP platform, which  overcomes the obstacles faced by randomized measurement protocols by harnessing the richness of QRP. While QRP serves as a quantum analog of reservoir computing, our focus in this study is limited to its role as a quantum state processing platform that extracts properties of the input quantum system. Therefore, the training at the output layer maintains as a straightforward approach by utilizing linear inversion.
A scheme of minimal quantum hardware comprising pair-wise connected quantum nodes is developed to estimate many properties of a quantum state. We highlight that only two-node training data is required, which captures the pair-wise interacting reservoir dynamics. As major advantages, our scheme requires single-node measurements, only in a single setting, and a linear reservoir size $2n$ with respect to the number of constituents $n$ of the input state.
All of these are particularly favorable for actual physical implementations. 

Furthermore, we establish rigorous performance guarantee by adopting the mindset of shadow estimation. According to Born's rule, one measurement of a quantum state is analogous to sampling a probability distribution once. Thus, learning properties of a quantum state involves measuring identical and independently distributed (i.i.d.) samples of the quantum state a certain number of times. To estimate $M$ observables of the state within an additive error $\epsilon$ and with constant confidence, the number of i.i.d. input samples consumed scales as $O\bigl({F}_\text{res} \ln M  /\epsilon^2\bigr)$. The factor ${F}_\text{res}$ represents the variance upper bound of the single sample estimator, which depends solely on the observables and the reservoir dynamics, and its magnitude is comparable to that of the shadow norm. As a direct consequence of the pair-wise reservoir dynamics, ${F}_\text{res}$ for a $k$-local tensor product observable is the product of that for each single-qubit observable. 
We support the theoretical results with extensive numerical simulations.

%%%%%%
\section{Quantum reservoir property estimation}\label{Sec.II}
In this section we introduce the scheme of quantum reservoir property estimation~(QRPE), which evaluates  physical properties of input quantum states based on a quantum reservoir processing device. 
Our goal parallels that of shadow estimation: to devise a resource-efficient classical representation of complex quantum states that permits access to their properties through subsequent classical processing.

\begin{figure}
    \includegraphics[width=0.98\columnwidth]{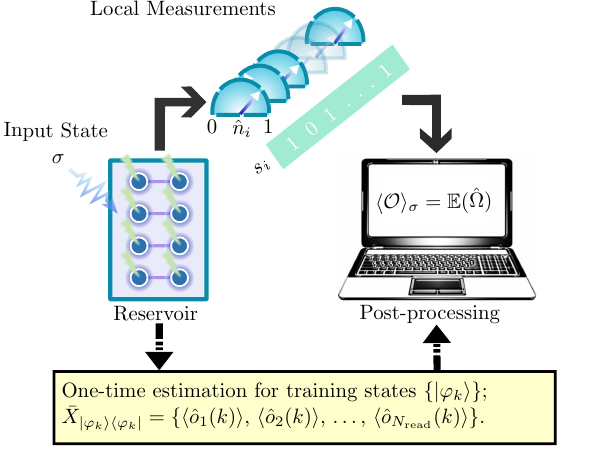}
\caption{Schematic illustration of predicting many properties with quantum reservoir processing.  A source prepares an $n$-qubit quantum state $\sigma$ which is taken as input by a pair-wise reservoir network with $2n$ nodes. For the $i$-th input copy, the local measurement operator on the $j$-th node is ${\hat{n}_j = (\openone-\varsigma_{j}^z)/{2}}$, and the readout is a bit string  ${s_i\in{\{0,1\}}^{n}}$, corresponding to one of the ${N_\text{read}}$ readout operators. Equipped with the training data, an unbiased estimator $\hat{\Omega}$ is constructed for the property $\mathcal{O}$.}
\label{fig: scheme}
\end{figure}

%%%%%%
\subsection{Physical setting}\label{sec:QRPE}
As a proof-of-principle, we first consider a dynamical estimation device based on pair-wise connected interacting qubits, as shown in Fig.~\ref{fig: scheme}, which is also known as quantum registers~\cite{Regis_PhysRevLett.93.150501,Regis_doi:10.1126/science.1139831}. Later in Sec.~\ref{sec: CVres} we will demonstrate that our scheme works also for continuous variable reservoir systems evolving under open quantum dynamics.  The connections between quantum nodes are obtained with transverse exchange interactions~\cite{majer2007coupling} and each qubit is excited with an onsite driving field.  For $n$-qubit input states, there are $n$ pairs of reservoir nodes. The corresponding Hamiltonian of the device is given by
\begin{equation}\label{eq: Hamiltonian}
    \begin{aligned}
        \hat{H} =  &\sum_{  i =1}^n \Bigl[ J \bigl( \varsigma_{2i-1}^x \varsigma^x_{2i} + \varsigma_{2i-1}^y \varsigma^y_{2i} \bigr) + P_1 \varsigma_{2i-1}^x \Bigr. \\  & +  E_1 \varsigma_{2i-1}^z
         \Bigl. + P_2 \varsigma_{2i}^x + E_2 \varsigma_{2i}^z \Bigr]\,.
    \end{aligned}
\end{equation}
The operators $\varsigma_i^{x,y,z}$ represent the Pauli operators on the $i$-th quantum node, which has a compatible dimension with the context. The parameter $J$  represents the strength of the pair-wise transverse exchange interaction between the reservoir nodes. The parameters $P_{1,2}$ and $E_{1,2}$ represent driving field strength and onsite energy respectively. 
Such a device can be readily realized with superconducting qubits, where the exchange interaction can be realized via a cavity quantum bus~\cite{majer2007coupling}. Moreover, the form of $\hat{H}$ is of a quantum spin Hamiltonian which can be realized in a variety of platforms such as NMR~\cite{alvarez2015localization,kusumoto2021experimental}, quantum dots~\cite{kandel2021adiabatic}, and trapped ions~\cite{porras2004effective}.

\subsection{Measurement protocol}\label{sec:QRPE-MP}
The quantum state of interest, which we denote by $\sigma$, is injected into the quantum reservoir via an invertible map. For simplicity, we consider the {\sc swap} operations. For a pair of interacting nodes ${\langle 2i-1, 2i\rangle}$, the $(2i-1)$-th reservoir node is connected to the input system.  Thus, the initial state of the network at time ${t=0}$ is given by the density matrix:
${\rho(0) = \sigma \otimes [|0\rangle \langle 0 |]_\text{rest}}$,
where the suffix ‘rest’ indicates the network nodes other than the ones connected to the input qubits via the {\sc swap} gates. 
The initial reservoir state evolves in time as,
\begin{eqnarray}\label{eq:reservoirstate}
\rho(t) = \hat{U}^\dagger (t) \rho(0) \hat{U} (t)\,,
\end{eqnarray}
where $\rho(t)$ is the density operator at time $t$ and $\hat{U}(t)=\exp(-it\hat{H}/\hbar)$ is the evolution operator. After a sufficient time evolution, we perform local Pauli-$Z$ measurements on the reservoir nodes (qubits). For each node there are two readouts, $+1$ and $-1$, represented by the projectors onto the positive and negative eigen-subspace of Pauli-$Z$ operator $\{\varsigma_{i}^z\}$ respectively. The final readouts are provided by a set of commuting readout operators
\begin{equation}\label{eq:readout_operators}
    \begin{aligned}
\{\hat{o}_i\}& =\prod_{j = 1}^{N_\text{node}}\Bigl\{\frac{\openone-\varsigma_{j}^z}{2}, \,\frac{\openone+\varsigma_{j}^z}{2}\Bigr\} \\
& = \bigl\{\hat{C}_{\varnothing}\bigr\}\cup \bigl\{\hat{C}_{\{i_1\}}\bigr\}\cup\bigl\{\hat{C}_{\{i_1',i_2'\}}\bigr\}\cup\dots\,,
    \end{aligned}
\end{equation} 
where $N_\text{node}$ is the number of reservoir nodes. $\hat{C}_S$ is an element of the set $\{\hat{o}_i\}$, which is related to the Pauli-$Z$ operators as 
\begin{eqnarray}
\hat{C}_{S} = \prod_{i\in S}\frac{\openone-\varsigma_{i}^z}{2}\prod_{j\notin S}\frac{\openone+\varsigma_{j}^z}{2}\,,
\end{eqnarray}
so it represents the configuration of measurement outcomes that only nodes in the set ${ S}$ result in $-1$ for local Pauli-$Z$ measurements. Hence, the readout for the $i$-th input is recorded by a bit string $s_i\in\{0,1\}^n$.
The total number of readout operators in $\{\hat{o}_i\}$ is given by
\begin{eqnarray}
N_\text{read}= \sum_{k=0}^{N_\text{node}}\tbinom{N_\text{node}}{k} = 2^{N_\text{node}}\,,
\end{eqnarray} 
where $\tbinom{N_\text{node}}{k}$ is the combinatorial number. 

Considering that $\{\hat{o}_i\}$ forms an orthogonal measurement basis with a total number of $2^{N_\text{node}}$ elements, a reservoir with a minimal of ${N_\text{node} = 2\log_2 d}$ nodes is required to estimate arbitrary quantum properties for input states supported on a $d$-dimensional Hilbert space $\mathcal{H}_d$. This observation agrees with our proposal of pair-wise connected reservoir networks. Moreover, the commutativity of the chosen readout operators leads to quantum resource effectiveness. This is in sharp contrast to the traditional quantum reservoir computing schemes where either the required size of the quantum reservoir or the temporal resolution in the measurement tend to be exponentially large ($\sim\!d^2$). In either situations, these traditional schemes are exponentially quantum resource consuming.

%%%%%%
\subsection{Training}\label{sec:train}

Here we introduce a training process that captures the internal maps of QRP.
In the context of QRP, the essential condition is that we have reproducible reservoir dynamics. However, the reservoir could be largely a black box, especially when we consider open quantum dynamics. Thus, a training process at a single output layer is required.
The basic property of a quantum state $\sigma$ is the linear function in the form of $\tr{(\mathcal{O}\sigma)}$, where $\mathcal{O}$ is a linear operator that is compatible with $\sigma$. For instance, the expectation value of an arbitrary observable $\mathcal{O}$ takes on this form.
To estimate the linear function $\tr{(\mathcal{O}\sigma)}$, the QRPE scheme essentially maps the input state $\sigma$ to a vector of probabilities for observing each readout operator
\begin{equation}\label{eq:statemap}
    \sigma\xrightarrow{\text{QRP}} \bar{X} = [\langle\hat{o}_1\rangle;\,\langle\hat{o}_2\rangle;\,\dots;\,\langle\hat{o}_{N_\text{read}}\rangle]\,,
\end{equation}
and the target observable $\mathcal{O}$ to a vector of weights
\begin{equation}\label{eq:obsmap}
    \mathcal{O}\xrightarrow{\text{QRP}}W = [w_1,\,w_2,\,\dots,\,w_{N_\text{read}}]\,,
\end{equation}
satisfying
\begin{equation}\label{eq:unbiasedest}
    W\cdot \bar{X} = \tr(\mathcal{O}\sigma)\,.
\end{equation}
We note that similar maps have also been studied in the context of analog quantum simulation~\cite{PhysRevX.13.011049,mcginley2022shadow}.
Here each readout in experiments requires only linear classical storage with respect to the system size, owing to the tensor product structure in Eq.~\eqref{eq:readout_operators}. The relation given by Eq.~\eqref{eq:statemap} is achieved by sampling from i.i.d. copies of the  $n$-qubit state $\sigma$ and processing the reservoir readouts with statistical methods, as addressed in Sec.~\ref{sec:ResEst}, while Eq.~\eqref{eq:obsmap} is achieved by a training process described below. 

For training, we require a one-time estimation of a known set of training states $\{\ket{\varphi_k}\}$.
Here we consider the training data to be accurate, and present results that account for statistical noise occurring outside of the training phase.
The reservoir dynamics are initialized by setting the parameters $J$, $P_{1,2}$, $E_{1,2}$ and the evolution time $t$. 
Each training step starts with inputting a training state $|\varphi_k\rangle\langle\varphi_k | $ into the reservoir to reach the initial training state
${\rho(0) = |\varphi_k\rangle\langle\varphi_k | \otimes[ |0\rangle\langle 0|]_\text{rest}}$,
which then evolves to $\rho(t)$ at time $t$.
From sufficiently many measurement results of each input training state, we estimate the expectation value $\langle\hat{o}_i (k)\rangle$ of the readout operator.
The training data is stored by arranging $\langle\hat{o}_i (k)\rangle$ into a column vector $\bar{X}_{\ket{\varphi_k}\bra{\varphi_k}}$.
In this way, we collect readout vectors $\bar{X}_{\ket{\varphi_k}\bra{\varphi_k}}$ corresponding to all the training states
$|\varphi_k\rangle$ for $k=1,\,2,\,\dots,\,N_\text{train}$. For the simplest qubit-node reservoir system as described in this section, there is no setback in having a complete characterization over the reservoir dynamics. However, when the reservoir dynamics is much more complex as discussed in Sec.~\ref{sec: CVres}, the training process described here can still capture the reservoir maps with a resource cost depending only on the dimension of input quantum system. We note that the training process is similar to a quantum process tomography over a small quantum system.

Until this step, the whole procedure is completely independent of the property to be estimated.
The knowledge of $\mathcal{O}$ is only required at the post processing level,
where we set the target output 
\begin{equation}\label{eq:tragetvector}
    Y^\text{tar}_k = \langle\varphi_k|\mathcal{O}|\varphi_k\rangle\,.
\end{equation}
Let ${Y^\text{out}_k = W\cdot\bar{X}_{\ket{\varphi_k}\bra{\varphi_k}}}$, and the sum of squared deviations between $Y^\text{out}_k$ and $Y^\text{tar}_k$ is
\begin{eqnarray}
\mathcal{E}_\text{train} = \bigl|Y^\text{out}-Y^\text{tar}\bigr|^2\,,
\end{eqnarray}
where $Y^\text{out}$ is a row vector with elements $Y^\text{out}_k$, ${k = 1,\,2,\,\dots,\,N_\text{train}}$, and $Y^\text{tar}$ is defined similarly.
For typical quantum reservoirs, a total of $d^2$ training states are needed to estimate arbitrary properties for an input state supported on $\mathcal{H}_d$. The set of training states $\{{\ket{\varphi_k}\bra{\varphi_k}}\}$ is a set of $d^2$ vectors which spans a $d^2$-dimensional Hilbert space, i.e., forms an informationally complete POVM. We choose the training states as
\begin{eqnarray}\label{eq:trainingstate}
 |\varphi_k \rangle =\otimes_{m=1}^n |k_m\rangle\,, \quad  k=\sum_{m=1}^n4^{m-1}k_m\,.
\end{eqnarray}
where ${k_m\in \{0,1,2,3\}}$, ${|0\rangle= [1;0]}$, ${|1\rangle = [0; 1]}$, $|2\rangle ={(|0\rangle + |1\rangle )/\sqrt{2}}$ and ${|3\rangle =(|0\rangle + i|1\rangle)/\sqrt{2}}$.
Denote 
\begin{equation}
    \bm{\mathcal{T}} = {\bm{X}_\text{t}}\bm{M}_\text{t}^{-1}\,,
\end{equation}
where ${\bm{M}_\text{t} = \bigl[|\varrho_1\rrangle,\,|\varrho_2\rrangle,\,\dots,\,|\varrho_{d^2}\rrangle\bigr]}$, $\varrho_k = \ket{\varphi_k}\bra{\varphi_k}$ is the density matrix of the $k$-th training state, $|\cdot\rrangle$ is the Louiville superoperator representation, and the matrix of training data is $\bm{X}_\text{t} = [\bar{X}_{\ket{\varphi_1}\bra{\varphi_1}},\,\bar{X}_{\ket{\varphi_2}\bra{\varphi_2}},\,\dots]$. Then the expectation of reservoir readout for an input state $\sigma$ is
\begin{equation}\label{eq:14}
\begin{aligned}
    \bar{X}_\sigma &= \bm{\mathcal{T}}|\sigma\rrangle\,,
\end{aligned}
\end{equation}
as is explained in Appendix.~\ref{app:geointerp}.
If $\bm{X}_\text{t}$ is full rank, there exists a weight vector $W$ that minimizes $\mathcal{E}_\text{train}$, i.e.,
\begin{equation}
W = Y^\text{tar}{\bm{X}_\text{t}}^{-1} \,,
\end{equation}
which can be written as $W = \llangle\mathcal{O}|\bm{\mathcal{T}}^{-1}$. 

Leveraging pair-wise reservoir dynamics in Eq.~\eqref{eq: Hamiltonian} lifts the burden on training, given that the training data inherits a tensor product structure. Once the training data of a single pair of interacting reservoir nodes is collected as $\bm{X}_\text{p}$, then we have
\begin{equation}
    \bm{X}_\text{t} = \bigotimes_{i = 1}^n \bm{X}_\text{p}\,,
\end{equation}
where $n$ is the number of node pairs.
Also, 
\begin{equation}
    \bm{M}_\text{t} = \bigotimes_{i = 1}^n \bm{M}_\text{p}\,,\quad\bm{\mathcal{T}} =  \bigotimes_{i = 1}^n \bm{\mathcal{T}}_\text{p}\,,
\end{equation}
where $\bm{\mathcal{T}}_\text{p} = {\bm{X}_\text{p}}\bm{M}_\text{p}^{-1}$.
Thus, the training task is effectively reduced to that of a two-node reservoir, and the full-rank requirement of $\bm{X}_\text{t}$ is correspondingly reduced to that of $\bm{X}_\text{p}$.  Moreover, the vector of weights only necessitates polynomial storage if the property $\mathcal{O}$ can be decomposed into a finite sum of tensor products. See Appendix.~\ref{app:geointerp} for more details.

%%%%%%
\subsection{Reservoir estimator}\label{sec:ResEst}
In this section we introduce the reservoir estimators for linear functions. With the training data at our disposal, we could analyze the sample efficiency of the reservoir estimators.

Suppose the observed readout operator for the $i$-th input copy is $\hat{o}_j$, then the so-called single snapshot $X_i$ is a vector where the $j$-th element is $1$ and the other elements are $0$. 
For a total of $N_\text{sample}$ input copies, one obtains a set of snapshots $\{X_i\,|\,i=1,\,2,\,\dots,\,N_\text{sample}\}$.
The single-snapshot estimator is
\begin{equation}
    \hat{\Omega} \equiv  W\cdot \hat{X}\,,
\end{equation}
where $\hat{X}$ is a random variable that conforms to the probability distribution behind $X_i$. Eq.~\eqref{eq:unbiasedest} indicates that $\hat{\Omega}$ is an unbiased estimator for $\tr(\mathcal{O}\sigma)$.
Hence in data processing, we could apply the median of means (MoM) method to neutralize the effect of outliers~\cite{Huang2020,RSPRXQuantum.2.030348}.
After processing ${N_\text{sample} = KN}$ input copies, we divide the snapshots into $K$ equally sized subsets $\{X_i^v\,|\,i=1,\,2,\,\dots,\,N\}$, and compute the mean value of the single-snapshot estimators for each subset. The corresponding estimators are
\begin{equation}
    \hat{\Omega}^v_\text{M} = \frac{1}{N}\sum_{i = 1}^{N}{ W\cdot {\hat{X}_i^v}},\quad v = {1,\,2,\,\dots,\,K}\,.
\end{equation} 
Then, the MoM estimator is given by 
\begin{equation}
     \hat{\Omega}_\text{MoM} = \text{Median}\bigl\{\hat{\Omega}^v_\text{M}\bigr\}\,.
\end{equation}
With this, we have
\begin{theorem}\label{theorem1}
The number of quantum state inputs needed for estimating a set of $M$ properties ${\{\mathcal{O}_i\,|\,i = 1,\,2,\,\dots,\,M\}}$ to precision $\epsilon$ and confidence level $1-\delta$ scales in 
\begin{equation}
    N_\text{sample}\sim O\biggl(\ln\Bigl(\frac{2M}{\delta}\Bigr)\frac{\max_i {F}_\text{res}^i}{\epsilon^2}\biggr)\,.
\end{equation}
The factor ${F}_\text{res}^i = || \mathcal{B}_i||_\infty $ is the variance upper bound of the single-snapshot estimator of $\mathcal{O}_i$ maximized over the possible quantum state inputs, where $||\cdot||_\infty$ represents the spectral norm, and $\mathcal{B}_i$ is defined by
\begin{equation}
\llangle\mathcal{B}_i|{\bm{\mathcal{T}}}^{-1} = \llangle\mathcal{O}_i|{\bm{\mathcal{T}}}^{-1} \odot \llangle\mathcal{O}_i|{\bm{\mathcal{T}}}^{-1}\,.
\end{equation}
\end{theorem}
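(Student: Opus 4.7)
The plan is to follow the classical--shadows--style recipe: verify that the single--snapshot estimator is unbiased, bound its variance in terms of an operator $\mathcal{B}_i$ that plays the role of the shadow norm, and then combine this with a median--of--means concentration inequality together with a union bound over the $M$ target observables.

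First I would confirm unbiasedness. By construction of the QRPE map, $\bar{X}_\sigma = \mathbb{E}[\hat{X}]$, since the $j$-th entry of the snapshot is the indicator for observing the readout operator $\hat{o}_j$, whose expectation is exactly $\langle\hat{o}_j\rangle_\sigma$. Using $W = \llangle\mathcal{O}|\bm{\mathcal{T}}^{-1}$ together with Eq.~\eqref{eq:14} gives
\begin{equation}
\mathbb{E}[\hat{\Omega}] = W\cdot \bar{X}_\sigma = \llangle\mathcal{O}|\bm{\mathcal{T}}^{-1}\bm{\mathcal{T}}|\sigma\rrangle = \tr(\mathcal{O}\sigma)\,,
\end{equation}
which matches Eq.~\eqref{eq:unbiasedest}.

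The main technical step is the variance bound. Because each snapshot $\hat{X}$ is a one--hot vector indicating which commuting readout operator was observed, one has $\hat{X}_j^2 = \hat{X}_j$ and $\hat{X}_j\hat{X}_k = 0$ for $j\neq k$. Therefore
\begin{equation}
\hat{\Omega}^2 = \sum_{j,k} W_j W_k \hat{X}_j \hat{X}_k = \sum_j W_j^2 \hat{X}_j = (W\odot W)\cdot \hat{X}\,,
\end{equation}
and taking expectations,
\begin{equation}
\mathbb{E}[\hat{\Omega}^2] = (W\odot W)\cdot \bar{X}_\sigma = \bigl(\llangle\mathcal{O}|\bm{\mathcal{T}}^{-1}\odot\llangle\mathcal{O}|\bm{\mathcal{T}}^{-1}\bigr)\,\bm{\mathcal{T}}|\sigma\rrangle\,.
\end{equation}
Defining $\mathcal{B}_i$ through $\llangle\mathcal{B}_i|\bm{\mathcal{T}}^{-1} = \llangle\mathcal{O}_i|\bm{\mathcal{T}}^{-1}\odot\llangle\mathcal{O}_i|\bm{\mathcal{T}}^{-1}$ as in the statement, this collapses to $\mathbb{E}[\hat{\Omega}^2] = \llangle\mathcal{B}_i|\sigma\rrangle = \tr(\mathcal{B}_i \sigma)$. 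Maximising over arbitrary density operators $\sigma$ then yields
\begin{equation}
\mathrm{Var}(\hat{\Omega}) \leq \mathbb{E}[\hat{\Omega}^2] \leq \max_\sigma \tr(\mathcal{B}_i\sigma) = \|\mathcal{B}_i\|_\infty = F_\text{res}^i\,,
\end{equation}
which is exactly the variance factor advertised in the theorem. This is the step I expect to take most care: writing it cleanly requires keeping track of the Liouville--superoperator conventions, the Hadamard product acting on readout indices, and the fact that the commutativity of $\{\hat{o}_i\}$ is what makes the cross terms vanish and hence makes $\mathcal{B}_i$ a well--defined Hermitian operator.

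Finally I would invoke the standard median--of--means concentration argument used in classical shadow estimation. Each group mean $\hat{\Omega}^v_\text{M}$ averages $N$ independent copies of $\hat{\Omega}$, so by Chebyshev's inequality each deviates from $\tr(\mathcal{O}_i\sigma)$ by more than $\epsilon$ with probability at most $F_\text{res}^i/(N\epsilon^2)$. Choosing $N = O(F_\text{res}^i/\epsilon^2)$ makes this probability $\leq 1/3$, and a Hoeffding--type bound on the number of bad groups shows that the median fails with probability at most $2\exp(-K/c)$ for some absolute constant $c$. Setting $K = O(\ln(2M/\delta))$ then controls a single observable with failure probability $\leq \delta/M$, and a union bound over $i = 1,\dots,M$ gives the joint guarantee, with total sample count $N_\text{sample} = KN = O\bigl(\ln(2M/\delta)\,\max_i F_\text{res}^i /\epsilon^2\bigr)$, as claimed.
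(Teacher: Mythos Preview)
Your proposal is correct and matches the paper's approach: Lemma~\ref{lemma: Var} in Appendix~\ref{supp:B} derives the same variance bound $\mathrm{Var}(\hat{\Omega})\le (W\odot W)\cdot\bar{X}_\sigma = \tr(\mathcal{B}_i\sigma)\le\|\mathcal{B}_i\|_\infty$ (via an equivalent Var/Cov expansion instead of your one--hot argument), and Lemma~\ref{lemma:MOM} supplies the same median--of--means plus union--bound step with explicit constants. One small wording issue: the cross terms vanish because the readout operators are mutually \emph{orthogonal} projectors (so $\hat{X}$ is one--hot), not merely because they commute.
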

\begin{proof}
    This efficiency scaling results from the median of means method, and we consider the worst-case scenario by maximizing ${F}_\text{res}^i$ over the set of observables. A more detailed proof is included in Appendix~\ref{supp:B}.
\end{proof}

The variance of a single-snapshot estimator for $\mathcal{O}$ is invariant for properties ${\{\mathcal{O}'\,|\,\mathcal{O}'= \mathcal{O} + c\openone, c\in \mathbb{C}\}}$. Thus, one could use only the traceless part of $\mathcal{O}$ to compute the worst-case variance upper bound. It is interesting to note that the MoM estimator won't have a visible significance in some tested cases~\cite{PhysRevA.104.052418, ExpPRXQuantum.2.010307}, where it could be replaced with the sample mean estimator. The performance factor ${F}_\text{res}$ allows for an investigation of the reservoir parameters in Eq.~\eqref{eq: Hamiltonian}. To achieve tomographic completeness, it is essential to have sufficiently large evolution time $t$ and hopping strength $J$, ensuring effective information scrambling. We also find that when both $E_1$ and $E_2$ equal 0 or both $P_1$ and $P_2$ equal 0, the reservoir map will be tomographically incomplete, which is closely related to the measurement setting. If we replace $\varsigma_z$ with $\varsigma_x$ or $\varsigma_y$ in Eq.~\eqref{eq:readout_operators}, then the reservoir map can be tomographically complete when both $P_1$ and $P_2$ equal 0. However, the reservoir settings with a good performance seem to happen without a pattern in numerical experiments. The performance of one reservoir setting at different time and random reservoir settings is illustrated in Fig.~\ref{fig:Var_T} and Fig.~\ref{fig:TMP} respectively in Appendix.~\ref{supp:B}. We also find that increasing more nodes or non-local interactions is unlikely to improve the performance. On the contrary, in numerical experiments of reservoirs with random nearest-neighbor interaction or fully-connected interaction, the average performance is worse than that of the pair-wise interacting reservoir by several orders of magnitude.
 
To further analyze the sample efficiency scaling of the QRPE scheme, we have:
\begin{theorem}\label{thm:Scaling}
    For a $k$-local tensor product observable, e.g.
    \begin{equation}
    \mathcal{O} = \bigotimes_{i = 1}^k \mathcal{O}_i \bigotimes_{i = k+1}^n \openone_i\,,
\end{equation}
the worst-case variance upper bound $|| \mathcal{B}||_\infty$ is the product of that of each local observables $\mathcal{O}_i$, i.e.,
\begin{equation}
    || \mathcal{B}||_\infty
    = \prod_{i = 1}^k || \mathcal{B}_i||_\infty\,,
\end{equation}
where $\mathcal{B}_i$ satisfies
\begin{equation}
    \llangle\mathcal{B}_i|{\bm{\mathcal{T}}_\text{p}}^{-1} = \llangle\mathcal{O}_i|{\bm{\mathcal{T}}_\text{p}}^{-1} \odot \llangle\mathcal{O}_i|{\bm{\mathcal{T}}_\text{p}}^{-1}\,.
\end{equation}
\end{theorem}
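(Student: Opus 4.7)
My plan is to exploit the pairwise tensor-product factorization of the reservoir map established earlier in the excerpt, namely $\bm{\mathcal{T}} = \bigotimes_{i=1}^{n} \bm{\mathcal{T}}_\text{p}$, which immediately gives $\bm{\mathcal{T}}^{-1} = \bigotimes_{i=1}^{n} \bm{\mathcal{T}}_\text{p}^{-1}$. Because the target observable factorizes as $\mathcal{O} = \bigotimes_{i=1}^{k}\mathcal{O}_i \bigotimes_{i=k+1}^{n}\openone_i$, its Liouville bra also factorizes, so the weight covector obeys
\begin{equation}
\llangle\mathcal{O}|\bm{\mathcal{T}}^{-1} \;=\; \bigotimes_{i=1}^{n}\,\llangle\mathcal{O}_i|\bm{\mathcal{T}}_\text{p}^{-1}\,,
\end{equation}
with the convention $\mathcal{O}_i=\openone_i$ for $i>k$. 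This is the foundational step that reduces everything to per-pair objects.

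Next I would invoke the mixed-product identity for Hadamard and Kronecker products, $(A\otimes B)\odot(C\otimes D) = (A\odot C)\otimes(B\odot D)$, applied factor by factor. Combined with the displayed factorization of $\llangle\mathcal{O}|\bm{\mathcal{T}}^{-1}$, this yields
\begin{equation}
\llangle\mathcal{O}|\bm{\mathcal{T}}^{-1}\odot\llangle\mathcal{O}|\bm{\mathcal{T}}^{-1} \;=\; \bigotimes_{i=1}^{n}\bigl(\llangle\mathcal{O}_i|\bm{\mathcal{T}}_\text{p}^{-1}\odot\llangle\mathcal{O}_i|\bm{\mathcal{T}}_\text{p}^{-1}\bigr)\,.
\end{equation}
Reading off the definition of $\mathcal{B}_i$ on each single-pair factor and reassembling via $\bm{\mathcal{T}} = \bigotimes_i \bm{\mathcal{T}}_\text{p}$ gives $\mathcal{B} = \bigotimes_{i=1}^{n}\mathcal{B}_i$, where each local $\mathcal{B}_i$ is recovered from $\llangle\mathcal{B}_i|\bm{\mathcal{T}}_\text{p}^{-1} = \llangle\mathcal{O}_i|\bm{\mathcal{T}}_\text{p}^{-1}\odot\llangle\mathcal{O}_i|\bm{\mathcal{T}}_\text{p}^{-1}$.

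To finish, I would use multiplicativity of the spectral norm under tensor products, $\|A\otimes B\|_\infty = \|A\|_\infty\|B\|_\infty$, so that $\|\mathcal{B}\|_\infty = \prod_{i=1}^{n}\|\mathcal{B}_i\|_\infty$. The remaining step is to discard the $n-k$ identity factors: for $\mathcal{O}_i=\openone_i$, the completeness of the readout operators forces $\llangle\openone|\bm{\mathcal{T}}_\text{p}^{-1}$ to be the all-ones row vector (since readout probabilities sum to one on every input), hence Hadamard-squaring gives back the same vector and $\mathcal{B}_i=\openone_i$ with $\|\mathcal{B}_i\|_\infty=1$. This lets the product collapse to $\prod_{i=1}^{k}\|\mathcal{B}_i\|_\infty$, establishing the claim.

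The mechanics are algebraically straightforward; the only subtle point I foresee is verifying carefully that $\llangle\openone|\bm{\mathcal{T}}_\text{p}^{-1}$ really is the all-ones vector, as this is what cleanly kills the identity factors. If it were not, one would still get $\|\mathcal{B}\|_\infty = \prod_{i=1}^{n}\|\mathcal{B}_i\|_\infty$, but the factors from $i>k$ would contribute nontrivially and the stated identity would fail. I would justify this by writing $\bar{X}_\sigma = \bm{\mathcal{T}}_\text{p}|\sigma\rrangle$ and noting $\sum_j \bar{X}_{\sigma,j} = \tr\sigma = 1$ for all density inputs, so the all-ones row vector applied to $\bm{\mathcal{T}}_\text{p}$ reproduces $\llangle\openone|$, i.e., it equals $\llangle\openone|\bm{\mathcal{T}}_\text{p}^{-1}$.
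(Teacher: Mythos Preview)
Your proposal is correct and follows essentially the same route as the paper's proof: both exploit the tensor factorization $\bm{\mathcal{T}} = \bigotimes_i \bm{\mathcal{T}}_\text{p}$, the mixed-product compatibility of Hadamard and Kronecker products to factor $W\odot W$, the identification $W_{\openone} = [1,\dots,1]$ so that identity sites contribute trivially, and the multiplicativity of the spectral norm under tensor products. The paper presents the argument by directly manipulating $\max_\sigma W\odot W\,\bar{X}$ and reading off $\tr\bigl(\bigotimes_{i=1}^k\mathcal{B}_i\otimes\bigotimes_{i>k}\openone_i\,\sigma\bigr)$, whereas you first assemble the operator identity $\mathcal{B}=\bigotimes_i\mathcal{B}_i$ and then take the norm---but these are the same computation, and your explicit justification that $\llangle\openone|\bm{\mathcal{T}}_\text{p}^{-1}$ is the all-ones vector is exactly what the paper uses (stated slightly earlier in the appendix as $W_{\openone}=[1,\dots,1]$).
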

\begin{proof}
    This theorem results from the pair-wise reservoir dynamics. See Appendix~\ref{supp:B} for the details.
\end{proof}

\begin{figure}[t]
	\centering
	\setcounter {subfigure} {0} {
		\includegraphics[width=0.45\textwidth]{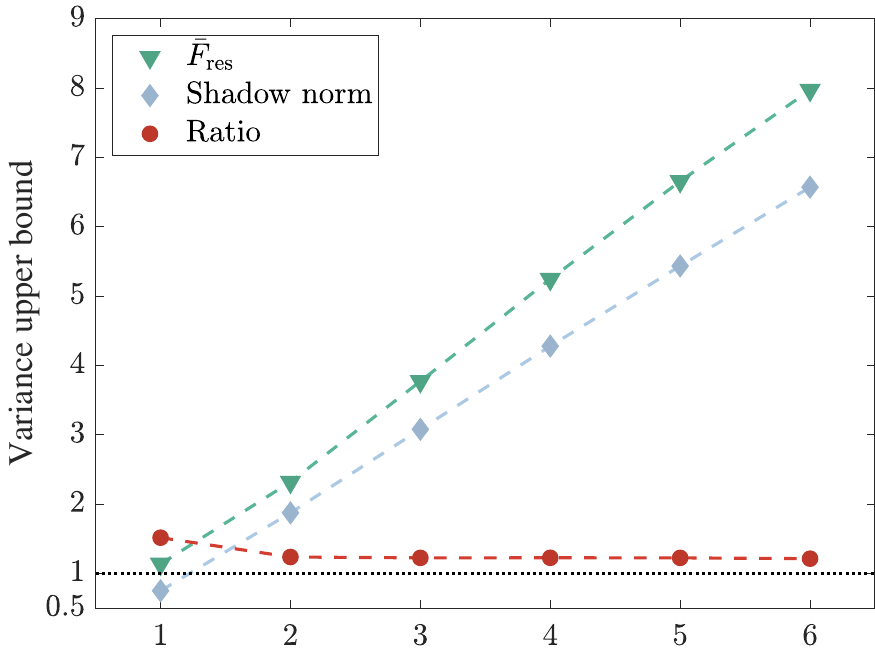}}
	\\
	\setcounter {subfigure} {0} {
		\includegraphics[width=0.45\textwidth]{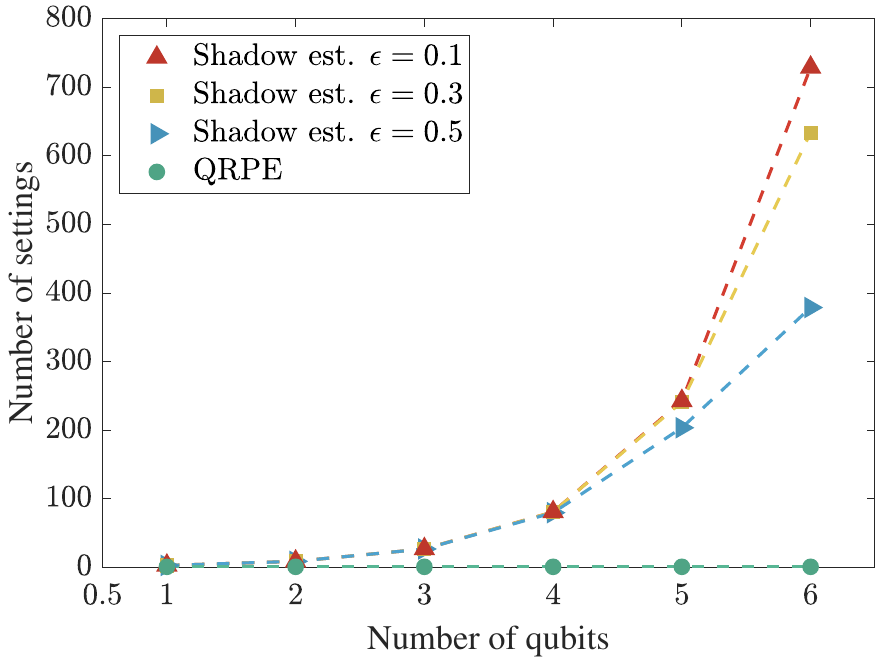}}
    \caption{Comparison with the shadow estimation with Pauli measurements~\cite{Huang2020}. (top) Average variance upper bound of fidelity estimation. For each system size we randomly generate 1000 pure states as the target states. The ratio of the average variance upper bounds for QRPE and shadow estimation remains almost constant as the number of qubits grows. (bottom) The average number of measurement settings invoked to reach a given precision with confidence level $0.90$. For shadow estimation, we assume the variance in single-snapshot estimator equals a tenth of the shadow norm on the top figure. While QRPE requires only a single setting, the standard shadow protocol with Pauli measurements~\cite{Huang2020,BiasedPauliShadowHadfield2022} requires exponentially increasing settings to reach an arbitrary precision. The reservoir setting we choose for qubit system is given by  $J = -0.41$, $P_1 = 4.0$, $P_2 = 1.3$, $E_1 = 0.71$, $E_2 = 0.46$ and $t = 1$. 
    The unit for the Plank's constant is meV$\cdot$ps.
    }
    \label{fig:RandFid}
\end{figure}

Theorem~\ref{thm:Scaling} indicates that for $k$-local tensor product property estimation, if a reservoir setting works well in the single-qubit case, then it also works well in the multi-qubit case. 
Thus, our approach for evaluating the reservoir parameters $J,\,P_{1,2},$ and $E_{1,2}$ is based on the performance in the single-qubit state overlap estimation task.
To see why overlap estimation reflects the overall performance of observable estimation, note that the single-snapshot estimator's variance of an arbitrary property $\mathcal{O}'$ satisfying $\mathcal{O}' = c_1\sigma + c_2\openone$ equals $c_1^2$ times that of $\sigma$, where $c_{1,2}$ are real numbers and $\sigma$ is a density matrix.
We find several settings that lead to small average variance upper bound of random target states, and choose one of them as the reservoir setting used in this manuscript. 

In the task of pure state fidelity estimation, the efficiency of the current reservoir estimator outperforms the random Pauli measurements only in a fraction of target pure states. We note that the optimal sample efficiency of shadow estimation with local measurements is achieved with random Pauli measurements~\cite{SdPOVM_PhysRevLett.129.220502,Huang2020}.
Also, the ratio of the average variance upper bound of shadow estimation and that of the QRPE scheme is slightly larger than unity and almost invariant with the system size. While the average efficiency of the current setting only marginally differ from that of the shadow estimation, the number of settings for the standard shadow estimation is exponentially large  compared to that of the present scheme. These results are shown in Fig.~\ref{fig:RandFid}. While one could obtain a scaling independent of the system size with global Clifford group measurements, it would require to implement complex Clifford compiling~\cite{DirectRB_PhysRevLett.123.030503, LowdepthShadow}.

Also, we could utilize random settings by probabilistic time multiplexing~(PTM), where the pair-wise training data $\bm{X}_\text{p}(t_k)$ are collected at $N_\text{time}$ different time points $\{t_k\,|\,k = 1,\,2,\,\dots,\,N_\text{time}\}$, and the single-snapshot estimator is constructed by measuring reservoir nodes at time $t_k$ with probability $p_k$, where
$\sum_k p_k = 1$. The optimization over probability distributions aims to reduce the average variance upper bound of single-qubit state overlap estimation, see Appendix~\ref{supp:B} for more details.

The QRPE protocol for linear functions is as follows:
\begin{enumerate}
  \item Perform a one-time estimation of the training states of a two-node reservoir and load the training data $\{\bm{X}_\text{p}\}$ to classical memory.
  \item Given properties $\{\mathcal{O}_i\}$, calculate weights $\{W_i\}$ with the training data. Obtain the worst-case variance upper bound $\max_i || \mathcal{B}_i||_\infty$.  Calculate $N_\text{sample}$ with the given confidence $1-\delta$ and additive error $\epsilon$. 
  \item Process i.i.d. copies of the unknown state $\sigma$ with the quantum reservoir network. Load $N_\text{sample}$ snapshots $\{X_i\}$ to classical memory.
  \item Calculate the estimated values $\{\tilde{\mathcal{O}}_i\}$ with the weights and snapshots.

\end{enumerate}
The reservoir estimators for nonlinear functions are based on U-statistics \cite{Huang2020,Hoeffding1948}, which is a generalization of the sample mean estimator.
It is worth noting that the reservoir snapshots are ready to be used to estimate future properties of the input state $\sigma$.

%%%%%%
\section{Applications}\label{Sec.III}
Here we provide a wide range of applications of the QRPE scheme. The reservoir initialization and evolution time are fixed for all applications. See Fig.~\ref{fig:RandFid} for the details of reservoir settings.
\subsection{Fidelity estimation}
Quantum fidelity is a widely used distance measure for quantum states, which is a linear function of the given state when the target state is a pure state~\cite{Fid2_PhysRevLett.106.230501, Fid_Cerezo2020}. 
In Fig.~\ref{fig:RandFid}, we present an illustrative example of how the reservoir estimation scheme works in fidelity estimation. Numerical results indicate that for the average worst-case variance upper bound for pure target states randomly generated from Haar measure, the ratio between QRPE and the random Pauli measurement is close to 1, while there exists an exponential improvement in the number of measurement settings.

\subsection{Entanglement detection}
Entanglement is an indispensable resource in tasks ranging from quantum computation to quantum communication~\cite{Ent_Vedral2014}. Whether one can claim its existence for a given system has aroused both theoretical and experimental interests~\cite{EntD1_PhysRevA.76.030305,EntD2_GUHNE20091,EntD3_Dimi2018,EntD4_PhysRevA.86.022311,EntD5_PhysRevA.72.022340}. However, the difficulty in describing the convex space of separable states poses a trade-off relation between the detection ability and  effectiveness of entanglement criteria~\cite{EntDectFL_PhysRevLett.129.230503}. 
With its capability of estimating multiple observables simultaneously, the QRPE scheme is a natural fit for the task of entanglement detection with linear entanglement criteria.

\begin{figure}[t]
    \includegraphics[width=.98\columnwidth]{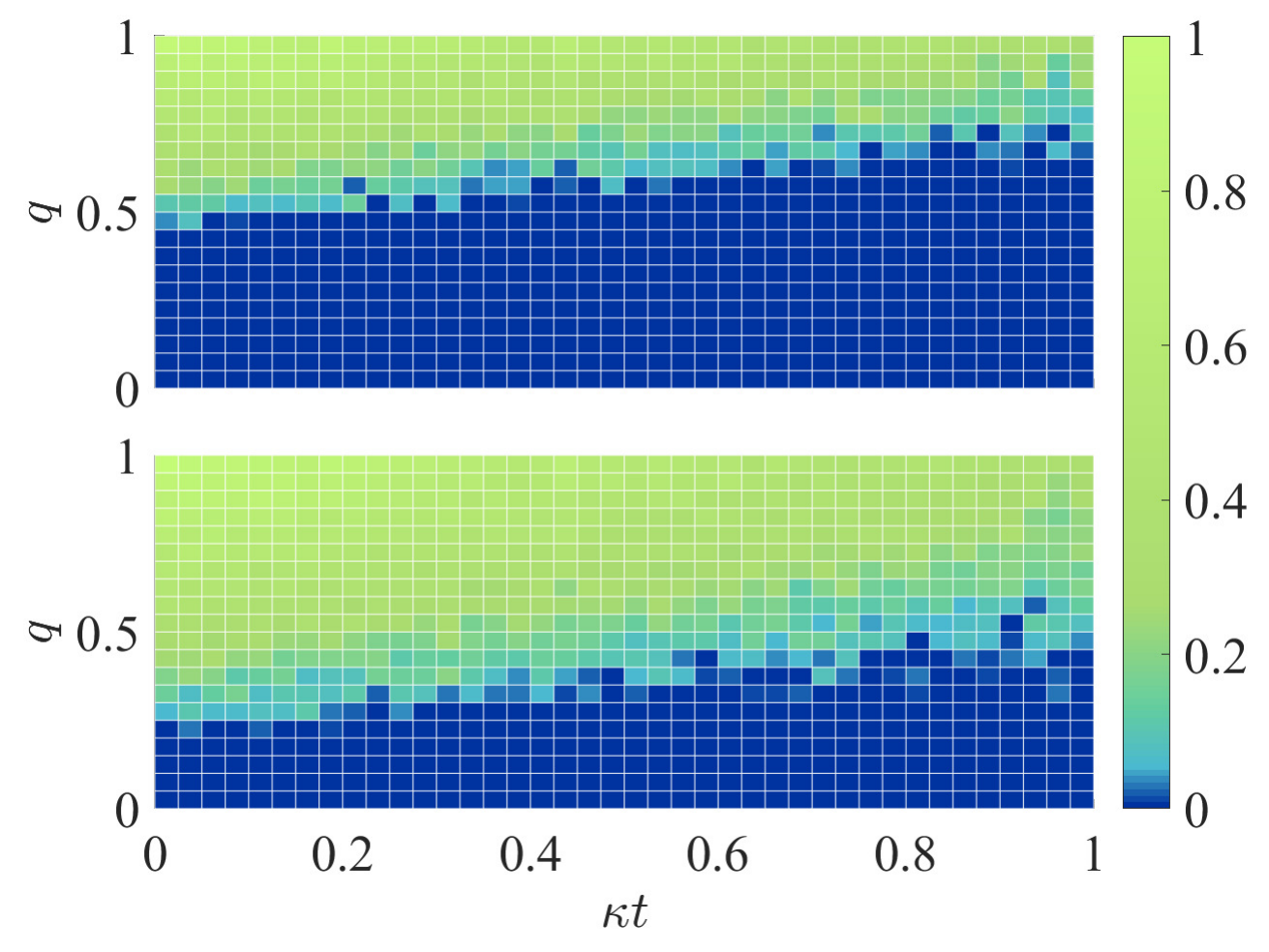}
    \caption{Observation of entanglement sudden death. (top): estimation of $W_\text{GME}$. (bottom): estimation of $W_\text{ME}$. We use 6000 measurements for each $\{q,\kappa t\}$, and the maximal estimation error for estimating GME and ME witnesses are 0.11 and 0.13 respectively. It can be observed that under a small dephasing noise, both the GME and ME witnesses vanish at a finite time for some initially entangled states, which indicates the emergence of ESD.
    }
    \label{fig:EntSD}
\end{figure}

We illustrate the QRPE scheme on the detection of an intriguing and more targeted phenomenon, the entanglement sudden death (ESD)~\cite{sdd_doi:10.1126/science.1167343,ESD_doi:10.1126/science.1139892}.  Consider a three-qubit GHZ-type state~\cite{sdd3_PhysRevA.79.012318}
\begin{equation}
    \rho = \frac{1-q}{8}\openone+q\rho_\text{G}\,,
\end{equation}
where $\rho_\text{G} = (\ket{000}+\ket{111})(\bra{000}+\bra{111})/2$. The dephasing channel is defined by the Kraus operators
\begin{equation}
    K_0 = \sqrt{1-p}\openone,\quad K_{1,2} = \frac{\sqrt{p}}{2}(\openone\pm \varsigma^z)\,.
\end{equation}
Set $p = 1-\exp{(-\kappa t)}$, then dephasing of the initial state $\rho$ reflects on a factor $\exp{(-\kappa t)}$ that times the off-diagonal elements.
In Fig.~\ref{fig:EntSD}, we use the following optimal linear entanglement witnesses for detecting genuine multipartite entanglement~(GME) and any multipartite entanglement~(ME) respectively~\cite{3QEW_doi.org/10.26421/QIC13.3-4}:
\begin{equation}
\begin{aligned}
        W_\text{GME} &= \openone - 2\rho_G\,,\\
        W_\text{ME} &= \openone - 4\rho_G + 2\rho_{G-}\,.
\end{aligned}
\end{equation}
where $\rho_{\text{G}-} = (\ket{000}-\ket{111})(\bra{000}-\bra{111})/2$. We normalize the spectral norm of the two witnesses, and estimate both of them simultaneously.
With a total of 6000 snapshots for each $\{q,\kappa t\}$, the maximal difference from the true value is 0.13. We see that under a small dephasing noise, both the GME and ME witnesses vanish at a finite time for some initially entangled states, indicating the emergence of ESD.
\subsection{Estimating expectation values of local and global observables}
Estimating the values of observables is a fundamental task in various quantum information processing tasks.
For local observables, we present the numerical results of estimating the local Pauli observables of randomly chosen four-qubit states.
The numerical result is shown in Fig.~\ref{fig:PaulikLocal}. We observe that the sample complexity required for achieving a given error rate of estimating Pauli observables increases exponentially with the locality, which agrees with Theorem.~\ref{thm:Scaling}.

\begin{figure}[t]
    \includegraphics[width=.98\columnwidth]{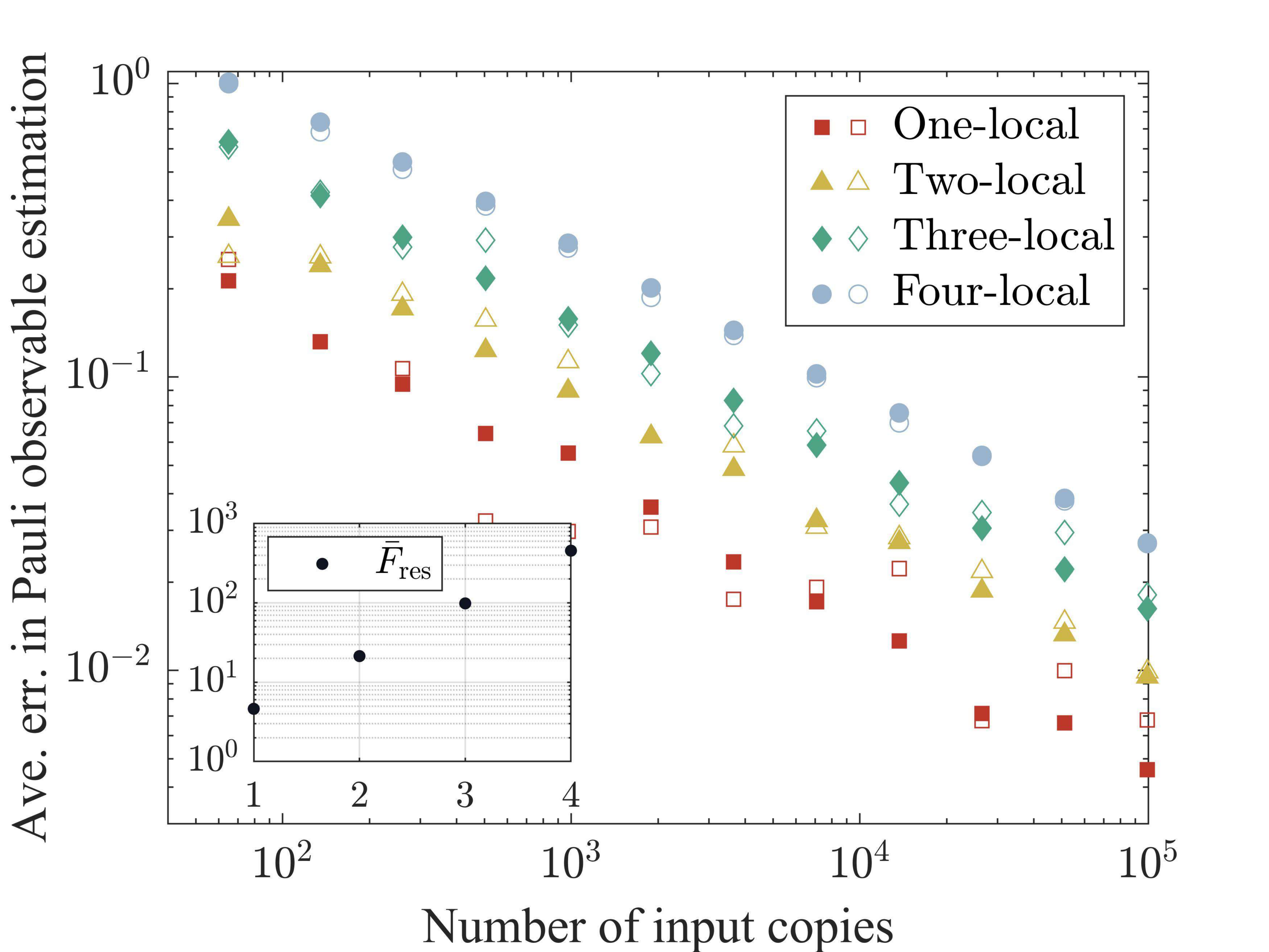}
    \caption{Numerical experiment for estimating Pauli observables of randomly chosen one, two, three and four-qubit input states. Each data point represents the average error of all the $k$-local Pauli observables of a random $k$-qubit input state. The solid marks represents the average performance of $30$ independent experiments, while the hollow marks represents a single experiment. The subfigure shows the variance upper bound $\bar{F}_\text{res}$, averaged over the $k$-local Pauli observables. Numerical results indicate that the worst-case sample complexity of estimating Pauli observables increases exponentially with the locality.
    }
    \label{fig:PaulikLocal}
\end{figure}

For global observables, we consider the task of GHZ state fidelity estimation. For an input GHZ state with three, six, nine and twelve qubits, we present the results of numerical experiment as the number of input copies versus the average error in Fig.~\ref{fig:GHZfid}. It can be seen that the sample complexity of identifying GHZ states with a given error rate increases exponentially with the number of qubits, which is similar to the case of local tensor product observables.

\begin{figure}[t]
    \includegraphics[width=.95\columnwidth]{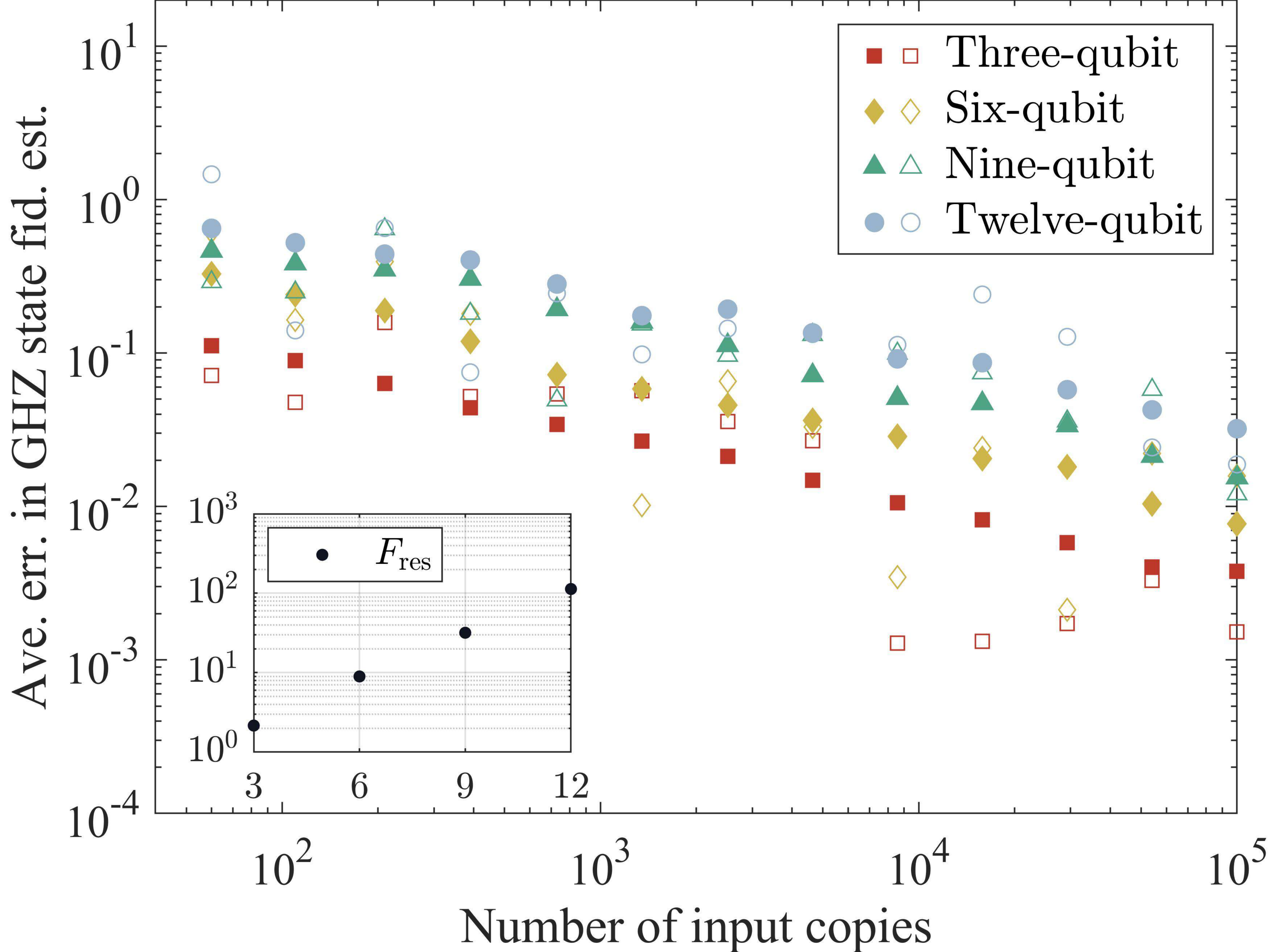}
    \caption{Numerical experiment for identifying GHZ states with fidelity estimation. The solid marks represents the average performance of $50$ independent experiments, while the hollow marks represents a single experiment. The subfigure shows the variance upper bound for the $k$-qubit GHZ state. Numerical results indicate that the sample complexity of identifying GHZ states increases exponentially with the number of qubits.
    }
    \label{fig:GHZfid}
\end{figure}
\subsection{Estimating nonlinear functions}
The reservoir measurements only describe linear functions in Eq.~\eqref{eq:unbiasedest}. Nevertheless, experimentally accessible nonlinear functions are typically measured by estimating linear functions that can be translated into the QRPE scheme.
For instance, the {\sc swap} trick is widely used in purity estimation. For two copies of the input state $\sigma$, the {\sc swap} operator $S$ obeys $\tr(S\sigma\otimes\sigma) = \tr(\sigma^2)$.
Numerical simulation of the purity estimation of 10000 random one-qubit input states shows that the average variance upper bound is around 2.9.

The R\'enyi entropy is another important nonlinear factor for characterizing entanglement, which is the logarithm of subsystem purity. 
The estimator of second R\'enyi entropy is
\begin{equation}
    \tr(\rho_A^2) = \tr(S_A\rho\otimes\rho)\,,
\end{equation}
where $S_A$ is the {\sc swap} operator acting on subsystem $A$ of two copies of $\rho$.
The second R\'enyi entropy of small subsystems is useful in avoiding weak barren plateaus (WBP)~\cite{WBP_PRXQuantum.3.020365}.
Here we perform WBP diagnosis with the reservoir estimation scheme. For an initial state $\ket{0}^{\otimes 14}$, each gate sequence of the variational quantum eigensolver~(VQE) circuit is composed of random local rotations $\exp(-\frac{i}{2}\theta \varsigma)$, where $\theta\in [\pi/20,\pi/20]$ and $\varsigma\in\{\varsigma_x, \varsigma_y, \varsigma_z\}$, and nearest-neighbor controlled-Z gates with periodic condition.
We detect the emergence of WBP by estimating the second R\'enyi entropy of a small region consisting of $N_A$ qubits via 2000 measurements at each circuit depth. The numerical experiment is shown in Fig.~\ref{fig:WBP}. We observe that the shaded region created by 10 independent estimations is centered by the real value of the second R\'enyi entropy with a small fluctuation, which grows with the size of the subsystem. Within a circuit depth of 100, the QRPE scheme efficiently detects the emergence of WBP by the clear phenomenon that the estimated value of second R\'enyi entropy approaches the page entropy.
 
\begin{figure}[t]
    \includegraphics[width=.95\columnwidth]{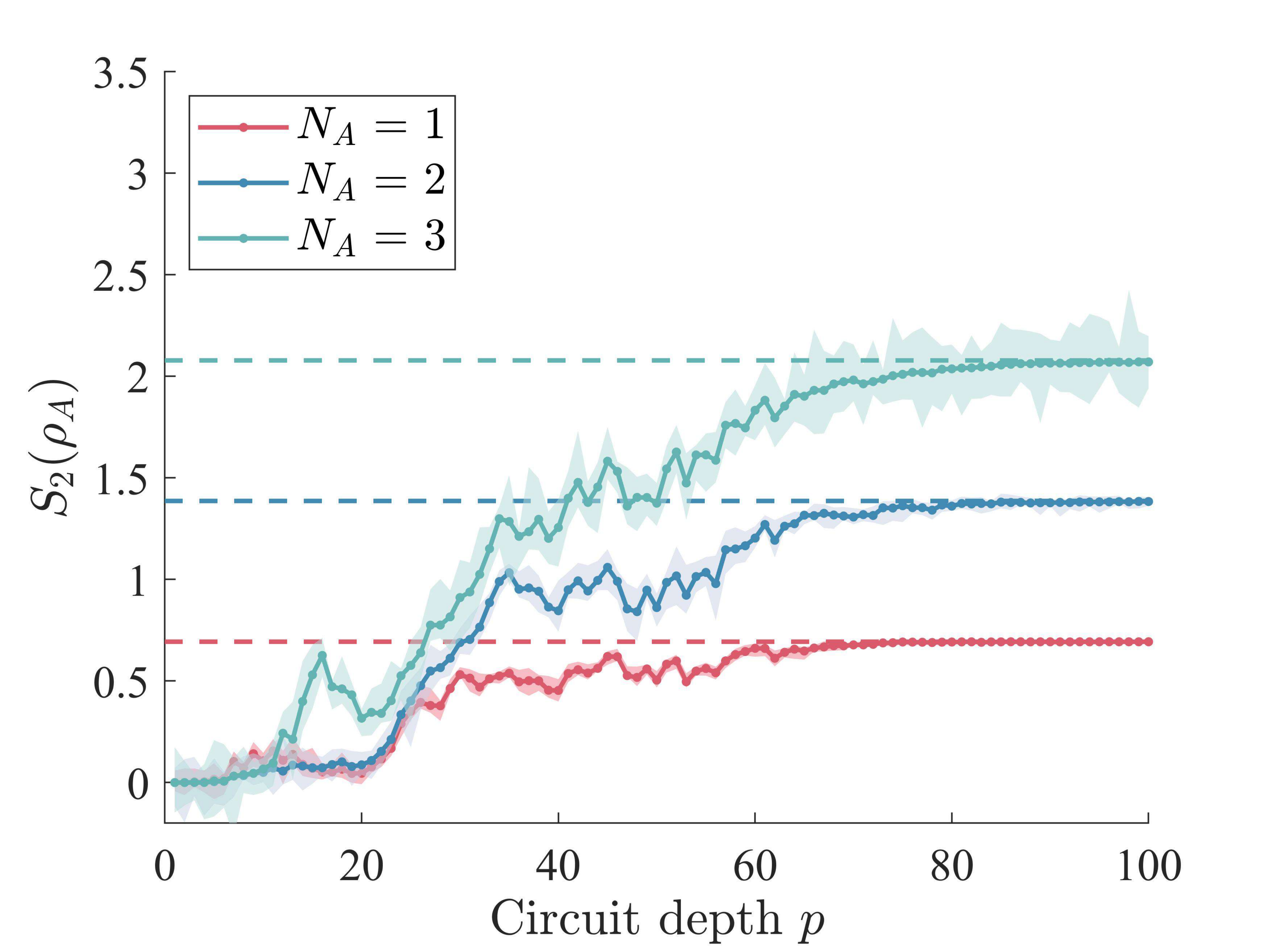}
    \caption{ Saturation of the second R\'enyi entropy $S_2(\rho_A)$ of a small region $A$ consisting of $N_A$ qubits. The solid curves represent $S_2(\rho_A)$ of a fourteen-qubit state generated by a VQE circuit with depth $p$, which approaches the page entropy represented by the dashed line. For reservoir estimation at each circuit depth we use $2000$ independent snapshots. The shaded region is the fluctuation of estimated value in 10 independent estimations. 
    }
    \label{fig:WBP}
\end{figure}

\subsection{Higher-Dimensional and Hybrid Systems}\label{sec:DV_QR}
Higher-dimensional systems and hybrid systems with non-identical local dimensions are of fundamental importance as a playground to reveal interesting quantum phenomena, such as quantum steering~\cite{Steer_RevModPhys.92.015001,Steer1_PhysRevA.94.062123} and the demonstration of contextuality and nonlocality trade-off~\cite{Hyb1_PhysRevLett.112.100401,Hyb2_PhysRevLett.116.090401}. Here we demonstrate the flexibility of our scheme beyond the qubit systems.

An intuitive generalization from the qubit case when the input state exists in a higher-dimensional or hybrid system is to design a quantum reservoir that has a compatible dimension. For example, if the input state consists of a qubit and a qutrit, then a quantum reservoir with two interacting pairs of qubits and qutrits can readily estimate properties of it.
The bosonic Hamiltonian for a pair of interacting reservoir nodes ${\langle i, j \rangle}$ is given by
\begin{equation}\label{eq: BosonH}
\begin{aligned}
    \hat{H}_{\langle i, j \rangle} & =  J \bigl( \hat{a}^\dagger_i \hat{a}_j + \hat{a}^\dagger_j \hat{a}_i \bigr) + P_1 (\hat{a}^\dagger_i+\hat{a}_i)\\
    & + P_2 (\hat{a}^\dagger_j+\hat{a}_j) + E_1 \hat{a}^\dagger_i \hat{a}_i + E_2 \hat{a}^\dagger_j \hat{a}_j \\
    & + \alpha_1 \hat{a}^\dagger_i \hat{a}^\dagger_i \hat{a}_i \hat{a}_i +\alpha_2 \hat{a}^\dagger_j \hat{a}^\dagger_j \hat{a}_j \hat{a}_j \,,
\end{aligned}
\end{equation}
where the operators $\hat{a}$ represent lowering operators of the quantum nodes (qudits). The parameters $J$, $P_{1,2}$, $E_{1,2}$ and $\alpha_{1,2}$ represent hopping, onsite driving field, onsite energy and nonlinear strength respectively. Similar to the qubit case, the reservoir map will be tomographically incomplete when $P_1$ and $P_2$ both equal 0 or $E_1$ and $E_2$ both equal 0.
For each pair of reservoir nodes ${\langle i, j \rangle}$, the $i$-th node is connected to the input system, and the number of pairs equals the number of constituents of the input state. We consider the unitary time evolution governed by the quantum Liouville equation.

For qudit systems with local dimension $d$, a superoperator basis consists of the generalized Gell-Mann matrices and the normalized identity~\cite{Gell_PhysRev.125.1067}. The readout operators are constructed in the same way with that of the qubit system, the only difference is that there are $d$ projection operators instead of two, corresponding to the $d$ possible population numbers.
Due to the tensor product structure of the measurements and training states, the reservoir estimation scheme can be naturally extended to hybrid systems with non-identical local dimensions, such as qubit-qutrit systems. The pair-wise reservoir setting for a pair of qudit nodes is the same with that of qudit reservoirs.

For application we apply virtual distillation~(VD)~\cite{Distill_PhysRevX.11.041036, Distill_https://doi.org/10.48550/arxiv.2203.07263,Distill_PRXQuantum.4.010303} to estimate the fidelity of a noisy state $\rho_\epsilon$,
\begin{equation}
    \langle\ket{\psi}\bra{\psi}\rangle_\text{VD} = \frac{\tr(\rho_\epsilon^m\ket{\psi}\bra{\psi})}{\tr(\rho_\epsilon^m )}\,,
\end{equation}
where 
\begin{equation}
    \rho_\epsilon = (1-\epsilon)\ket{\psi}\bra{\psi}+\epsilon\frac{\openone}{\tr(\openone)}\,.
\end{equation}
The estimator for $\rho_\epsilon^m$ is chosen as 
\begin{equation}
    \frac{1}{P (N, m)}\sum^* \prod_{i = 1}^m\hat{\rho}_{s_i}\,,
\end{equation}
where $\hat{\rho}_{s_i}$ is the single-snapshot estimator of the noisy state $\rho_\epsilon$, $P (N, m)$ is the $m$-permutations of $N$, $\sum^*$ denotes the summation over all distinct subscripts, i.e., ${s_1,s_2,\dots,\,s_m}$ is a $m$-tuple of indices from the set $\{1,\dots,\,N\}$ with distinct entries. The denominator is estimated similarly. The estimation results are illustrated in Fig.~\ref{fig:VD} for the following maximally entangled qubit-qutrit and two-qutrit pairs,
\begin{equation}
\begin{aligned}
    \ket{\psi_1} &= \frac{1}{2}(\ket{10}+\ket{12})+\frac{1}{\sqrt{2}}\ket{01}\,,\\
    \ket{\psi_2} &= \frac{1}{\sqrt{3}}(\ket{00}+\ket{11}+\ket{22})\,.
\end{aligned}
\end{equation}
We observe that the virtually distilled states exhibit a significantly improved fidelity as compared to the non-distilled states. Moreover, the statistical fluctuation for the $2\times 3$ system is smaller than that of the $3\times 3$ system, which is consistent with the performance in single-qudit pure state fidelity estimation. In the given reservoir setting for qutrit systems, the average ${F}_\text{res}$ for estimating fidelity of randomly chosen single-qutrit pure state is around 2.76, which is worse than the qubit case.

\begin{figure}[t]
    \includegraphics[width=.95\columnwidth]{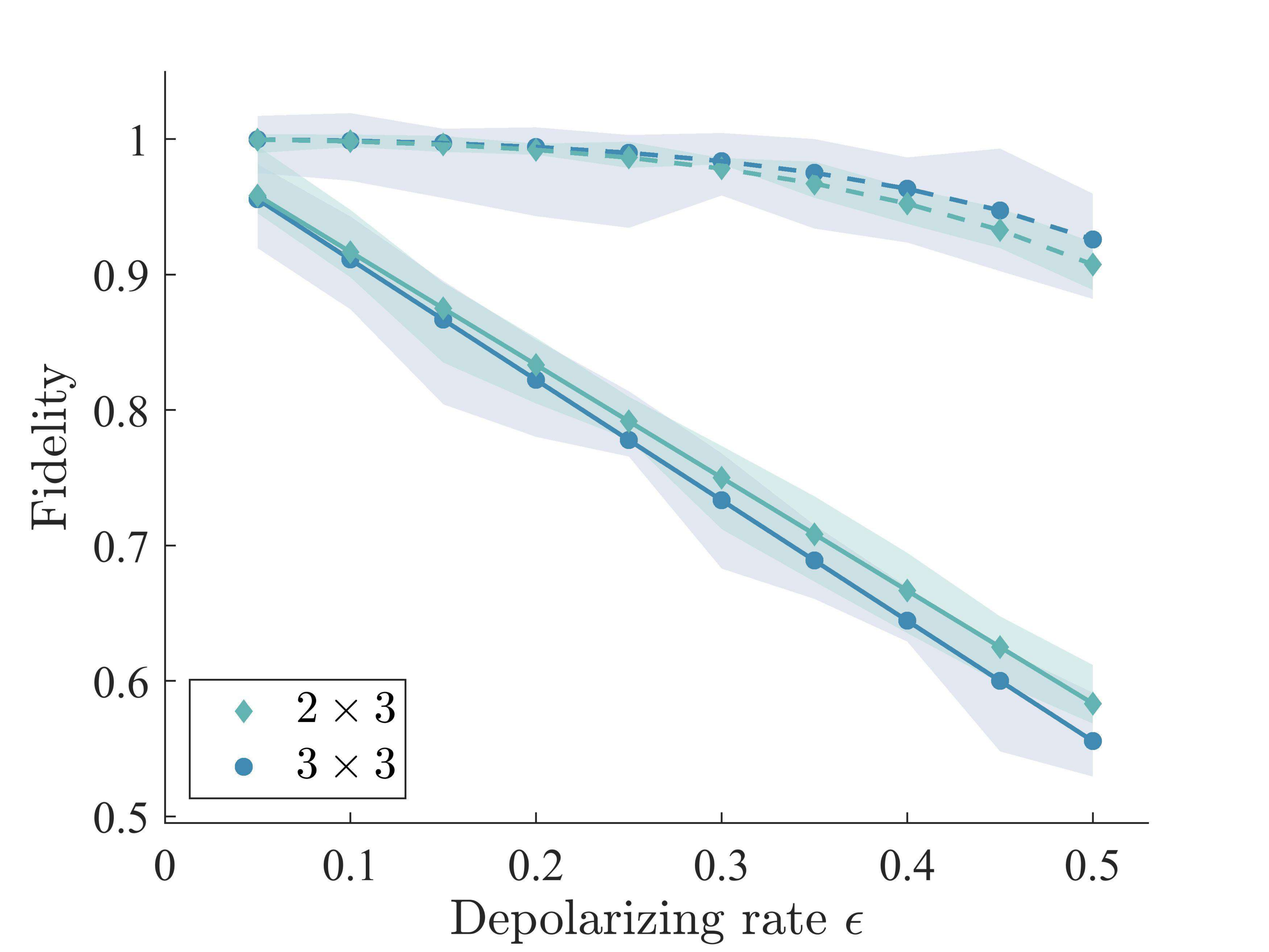}
    \caption{ The virtual distillation of  maximally entangled states mixed with depolarizing noise in $2\times 3$ and $3\times 3$ dimensional systems. The dashed curves represent the fidelity of the distilled states with $m = 2$, while the solid curves represent the fidelity of physical states with $m = 1$. The shaded region is the range of fluctuation of the estimated value in 10 independent estimations. For each reservoir estimation we use 10000 independent measurement readouts. The reservoir setting we choose for qutrit systems is $J = 0.9$, $P_1 = 2.1$, $P_2 = 1.1$, $E_1 = 1.1$, $E_2 = 0.4$, $\alpha_1 = 0.6$, $\alpha_2 = 0.7$.
    }
    \label{fig:VD}
\end{figure}

%%%%%%
\section{Continuous variable quantum reservoirs}\label{sec: CVres}
In this section we study a bosonic quantum reservoir that exists in continuous variable (CV) systems, while the input state emitted by a source still lies in discrete variable (DV) systems. The reservoir consists of identical pairs of interacting nodes, as shown in Fig.~\ref{fig: scheme}. A single pair of interacting reservoir nodes ${\langle k, l \rangle}$ is represented by the Hamiltonian
\begin{equation}
\begin{aligned}
    \hat{H}_{\langle k, l \rangle} = &\, J \bigl( \hat{a}^\dagger_k \hat{a}_l + \hat{a}^\dagger_l \hat{a}_k \bigr) -i P \sqrt{\gamma_k}(\hat{a}^\dagger_k- \hat{a}_k)\\
    & -i P \sqrt{\gamma_l}(\hat{a}^\dagger_l - \hat{a}_l) + E_1 \hat{a}^\dagger_k \hat{a}_k + E_2 \hat{a}^\dagger_l \hat{a}_l \\
    & + \alpha_1 \hat{a}^\dagger_k \hat{a}^\dagger_k \hat{a}_k \hat{a}_k +\alpha_2 \hat{a}^\dagger_l \hat{a}^\dagger_l \hat{a}_l \hat{a}_l \,,
\end{aligned}
\end{equation}
where $\hat{a}_i$ is the field operator of the $i$-th reservoir node, $P$ represents the strength of the coherent driving field, $E_i$ is the onsite energy and $\alpha$ is the onsite interaction strength (Kerr type nonlinearity).
We use essentially the same readout and training process as for the DV quantum reservoirs. 
Here the readout operators are projectors of Fock basis states. For instance, if the input state exists in a qubit system, then the measurements at each reservoir node include the first Fock state $\ket{0}\bra{0}$ and its complementary operator $\openone - \ket{0}\bra{0}$.

For a single pair of CV reservoir nodes, the input single-qudit system exists in a bosonic mode $\hat{b}_1$.  The density matrix $\rho$ of the entire system is in both of the quantum reservoir and incident modes, which is governed by the quantum master equation~\cite{PhysRevA.94.063825,QuantumNoise, Ghosh2019}
\begin{equation}\label{eq:Master}
    \begin{aligned}
    i\hbar\dot{\rho} =&\, [\hat{H}_{\langle 1, 2 \rangle}+f(t)\hat{H}_I, \rho] +\sum_{l = 1}^2 \frac{i\gamma_l}{2} \mathcal{L}(\hat{a}_l) \\
    &+ i f(t)  W^\text{in} \bigl([\hat{b}_1\rho, \hat{a}_{ 1}^\dagger]+ [\hat{a}_{ 1}, \rho\hat{b}_1^\dagger] \bigr)\\
     & +  f(t)\frac{i\eta}{2\gamma_1} \mathcal{L}(\hat{b}_1)\,,
    \end{aligned}
\end{equation}
where ${\hat{H}_I = \omega \hat{b}^\dagger_1 \hat{b}_1 - i P \sqrt{\eta/\gamma_1} (\hat{b}^\dagger_1 - \hat{b}_1)}$ is the input system Hamiltonian; ${\mathcal{L}(\hat{x}) = 2 \hat{x}\rho\hat{x}^\dagger - \hat{x}^\dagger\hat{x}\rho - \rho\hat{x}^\dagger\hat{x}}$ is the Lindblad operator acting on the field operator $\hat{x}$; $\gamma$ represents the decay rate; $W^\text{in}$ represents the input weight and $\eta = {(W^\text{in})}^2$ is set to remove the source photons that have excited the reservoir. On the right side of Eq.~\eqref{eq:Master}, the first term is the coherent Hamiltonian evolution of the reservoir; the second term represents the decay of reservoir modes; the second line represents the cascade coupling between the input mode $\hat{b}_1$ and the reservoir mode $\hat{a}_{1}$; and the last line represents the decay of the input mode. At time $t\in[t_1,t_1+\tau]$, ${f(t) = 1}$ and the input mode $\hat{b}_1$ is connected to the reservoir node. At other time points $f(t)$ equals zero. We measure the reservoir nodes at time ${t = t1+\tau+t2}$. The quantum master equation can be solved with the numerical method provided in Ref.~\cite{QRSP_PhysRevLett.123.260404}. Additionally, Eq.~\eqref{eq:Master} can be naturally generalized to the case where the input system has multiple constituents, each of which undergoes the same dynamical process. 

In numerical simulation, the truncated Fock space we use has a maximal photon number of 10. Here we set ${t_1 = 0.3}$, ${\tau = 1.4}$, ${t_2 = 0}$, ${\gamma_1 = 0.53}$, ${\gamma_2 = 0.87}$, ${P = 0.95}$, ${W^\text{in} = 0.56}$, ${E_1 = 0.94}$, ${E_2 = 0.57}$, ${J = -0.22}$ and ${\omega = 1}$. We choose ${\alpha_i/\gamma_i = 0.15}$, a ratio consistent with the experiment results~\cite{Delteil2019}. The spectral radius~\cite{QRSP_PhysRevLett.123.260404}, which is the largest eigenvalue of the hopping part of the reservoir Hamiltonian, is set to 1.5. With this reservoir setting, the average ${F}_\text{res}$ for estimating fidelity of randomly chosen single-qubit pure state is around 9.16. With numerical results, we find that the reservoir map becomes tomographically incomplete when the onsite driving field of the input system is missing. On the contrary, tomographic completeness can still be achieved if we set $E_i$, $\omega$ and $\alpha$ equal 0 or neglect the onsite driving field of the reservoir system. Also, we observe that the impact of Kerr nonlinearity on performance in numerical tests is not substantial.
Given the infinite dimensionality of the CV reservoir system, it has the capability to process any higher-dimensional or hybrid input states. However, with the same reservoir setting as described above, the average ${F}_\text{res}$ for estimating fidelity of randomly chosen single-qutrit pure states is around 490. Thus, similar to the DV quantum reservoir, the performance of CV quantum reservoir settings largely depends on the dimension of the input system.

\section{Conclusion}
We have presented a direct property estimation scheme, where a classical representation of quantum states is constructed with quantum reservoir processing and used for property estimation in the post-processing phase. Unlike existing techniques of shadow estimation that considers complex unitary ensembles and randomized measurements, our scheme explores the versatility of the quantum reservoir platform and requires only a single measurement setting.
The pair-wise interacting reservoir nodes considered in our scheme results in minimal quantum hardware for the DV reservoir, and minimal training cost that depends only on the input system.
The sample complexity has been rarely addressed in previous works on quantum reservoir computing.
In contrast, we have established a stringent performance guarantee regarding the number of samples to be processed by the reservoir network. Furthermore, our scheme can be naturally extended to higher-dimensional systems and hybrid systems with non-identical local dimensions. We complement the theoretical results with diverse applications.

For future research, reservoir computing is suitable for temporal pattern recognition, classification, and generation~\cite{review_Nakajima2020,Temporal1_Mujal2023,Review_TANAKA2019100}. While we have proposed a method for analyzing the statistical fluctuation of quantum reservoir outputs in property estimation, it is important to do so in temporal information processing tasks such as temporal quantum tomography~\cite{TemporalTomo_PhysRevLett.127.260401} and nonlinear temporal machine learning~\cite{Temporal_PhysRevApplied.8.024030}.  
Also, it is important to further devise an operational toolbox for optimizing the reservoir networks regarding the sample complexity and investigate whether the efficiency lower bound for local measurements~\cite{Huang2020,SdPOVM_PhysRevLett.129.220502} can be achieved on the quantum reservoir platform.
Moreover, there has been extensive research conducted on the learning abilities of quantum reservoir networks in tasks ranging from quantum to real-world problems~\cite{ComplexTask_Kutvonen2020,Xia2022,xia2023configured,Learning_PRXQuantum.3.030325}. These studies explore various network topologies and connectivities. While the pair-wise quantum reservoir construction is beneficial for property estimation, there lacks a general understanding of the role of topology and connectivity in quantum reservoir computing regarding sample complexity. 
Besides, quantum reservoir computing can be carried out with a single non-linear oscillator~\cite{PhysRevResearch.3.013077}, so it is important to analyze whether such a quantum reservoir can perform shadow estimation as well.
Another topic is the noise effect. There are inspiring discussions on the noisy training data~\cite{QELM}, robustness in tomographic completeness~\cite{TomoRub_PhysRevA.107.042402} and benefits of quantum noise~\cite{NoiseTem_PhysRevResearch.5.023057}. In our scheme the influence of time independent system noise on $\bm{\mathcal{T}}$ cancels out by a direct observation of Eq.~\eqref{eq:14}. However, a complete discussion on the noise effect and its mitigation in quantum reservoir processing is important for future experiments.

%%%%%%
\acknowledgments
We are grateful to Yanwu Gu, Zihao Li, Zhenpeng Xu and Ye-Chao Liu for helpful discussions. This work was supported by the National Natural Science Foundation of China (Grants No.~12175014 and No.~92265115) and the National Key R\&D Program of China (Grant No.~2022YFA1404900). S.G. acknowledges support from the Excellent Young Scientists Fund Program (Overseas) of China, and the National Natural Science Foundation of China (Grant No. 12274034).

%%%%%%
% \bibliographystyle{apsrev4-1}
% \bibliography{QRPE}

%apsrev4-2.bst 2019-01-14 (MD) hand-edited version of apsrev4-1.bst
%Control: key (0)
%Control: author (8) initials jnrlst
%Control: editor formatted (1) identically to author
%Control: production of article title (0) allowed
%Control: page (0) single
%Control: year (1) truncated
%Control: production of eprint (0) enabled
%

%%%%%%

%%%%%%
\onecolumngrid

\appendix
%%%%%%
\section{Theoretical interpretation of the QRPE scheme}\label{app:geointerp}
We start with the qubit input systems and reservoirs with qubit nodes for conciseness, but our method can be naturally extended to qudit input systems and hybrid input systems with different local dimensions.
To show how the QRPE scheme works, some may find it beneficial to examine the reservoir transformation from a geometric perspective.
The weights for an arbitrary property $\mathcal{O}$ and the readout probabilities for an arbitrary input state $\sigma$ are represented by vectors $W_\mathcal{O}$ and $\bar{X}_\sigma$ in the flat Euclidean space $\mathbf{E}^{d\times d}$, as shown in Eqs.~\eqref{eq:statemap} and \eqref{eq:obsmap} respectively. Since $\sum_{i = 1}^{d^2}{\langle \hat{o}_i\rangle_\sigma} = 1$, the set $\mathcal{C} = \{\bar{X}_\sigma\mid \sigma\succeq 0,\,\tr(\sigma) = 1\}$ is a region on a $(d^2-1)$-dimensional hyperplane. In addition, we have
\begin{lemma}\label{lemma:affinemap}
The reservoir transformation with pair-wise reservoir dynamics is an affine map from the ${(d^2-1)}$-dimensional state space to the ${(d^2-1)}$-dimensional convex region $\mathcal{C}$.
\end{lemma}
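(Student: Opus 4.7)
The plan is to exhibit the reservoir transformation as the restriction of a global linear map on Hermitian matrices to the affine hyperplane of trace-one operators, and then show injectivity via the pair-wise structure established in Section~\ref{sec:train}.

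First I would unpack the probability vector $\bar{X}_\sigma$. For each readout operator $\hat{o}_i$, one has
\begin{equation}
\langle\hat{o}_i\rangle_\sigma = \tr\bigl(\hat{o}_i\, \hat{U}^\dagger(t)(\sigma\otimes[|0\rangle\langle 0|]_\text{rest})\hat{U}(t)\bigr) = \tr(\mathcal{M}_i\, \sigma)\,,
\end{equation}
where $\mathcal{M}_i$ is obtained by Heisenberg evolving $\hat{o}_i$ and tracing out the rest nodes against $|0\rangle\langle 0|$. Hence $\sigma\mapsto\bar{X}_\sigma$ is the restriction to Hermitian matrices of a $\mathbb{C}$-linear map, which in the Liouville notation of Eq.~\eqref{eq:14} is exactly multiplication by $\bm{\mathcal{T}}$. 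Any linear map is \emph{a fortiori} affine when restricted to an affine subspace, so this already gives the affine property on the full hyperplane $\{\sigma=\sigma^\dagger,\ \tr\sigma=1\}$ of dimension $d^2-1$.

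Next I would show the image is contained in the hyperplane $\sum_i x_i=1$ in $\mathbf{E}^{d^2}$. Because the readout operators in Eq.~\eqref{eq:readout_operators} form a complete projective basis, $\sum_i \hat{o}_i = \openone$, so $\sum_i \langle\hat{o}_i\rangle_\sigma = \tr\sigma = 1$ for any state. Thus the image lies in a $(d^2-1)$-dimensional ambient affine subspace, and convexity of $\mathcal{C}$ follows at once from convexity of the state space together with linearity.

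The substantive step, and the one I expect to carry all the work, is that the map is actually injective on the state hyperplane, so that $\mathcal{C}$ really is $(d^2-1)$-dimensional and not a lower-dimensional face of it. Here I would invoke the pair-wise reservoir dynamics: by Section~\ref{sec:train} the training matrix factorises as $\bm{\mathcal{T}} = \bigotimes_{i=1}^n \bm{\mathcal{T}}_\text{p}$, so $\bm{\mathcal{T}}$ is invertible provided the single-pair transfer $\bm{\mathcal{T}}_\text{p}$ is invertible, equivalently provided the pair training data $\bm{X}_\text{p}$ is full rank on the $4$-dimensional single-qubit Liouville space. Because the single-pair training set in Eq.~\eqref{eq:trainingstate} spans the single-qubit operator space, generic choices of $J,P_{1,2},E_{1,2},t$ in Eq.~\eqref{eq: Hamiltonian} yield a full-rank $\bm{\mathcal{T}}_\text{p}$, so the tensor product $\bm{\mathcal{T}}$ is a linear isomorphism of $d^2$-dimensional spaces. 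Its restriction to the state hyperplane is then an affine bijection onto $\mathcal{C}$, which therefore has the claimed dimension. The main obstacle is to phrase the genericity condition cleanly without re-deriving Section~\ref{sec:train}; I would handle this by simply stating $\bm{X}_\text{p}$ full rank as the working hypothesis of the QRPE scheme, as was already adopted there.
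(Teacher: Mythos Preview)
Your proposal is correct and follows essentially the same route as the paper: Heisenberg-evolve the readout projectors to exhibit $\bar X_\sigma=\bm{\mathcal T}\,|\sigma\rrangle$ as a linear (hence affine) map, use the pair-wise factorisation $\bm{\mathcal T}=\bigotimes_i\bm{\mathcal T}_\text{p}$ to reduce invertibility to the single-pair case (taken as the working hypothesis of the scheme), and obtain convexity of $\mathcal C$ from linearity. The only cosmetic difference is that you make the hyperplane constraint $\sum_i x_i=1$ explicit inside the proof, whereas the paper states it just before the lemma.
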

\begin{proof}
From Eq.~\eqref{eq:reservoirstate} we have $\tr\bigl(\hat{o}_i\rho(t)\bigr) = 
  \tr\bigl(\bra{0} \hat{U} (t)\hat{o}_i\hat{U}^\dagger (t) \ket{0}_\text{rest}\sigma\bigr)$, i.e.
\begin{equation}\label{eq:affinemap}
\begin{aligned}
  \langle \hat{o}_i\rangle_\sigma & = \llangle\mathcal{T}(\hat{o}_i)|\sigma\rrangle\,,
\end{aligned}
\end{equation}
where $\mathcal{T}(\hat{o}_i) = \bra{0} \hat{U} (t)\hat{o}_i\hat{U}^\dagger (t) \ket{0}_\text{rest}$, and  $|\cdot\rrangle$ is the Louiville superoperator representation, i.e., to vectorize an operator by expanding it on an orthonormal operator basis. Note that two quantum states can always be related by a CPTP map. Thus, there is a set of Kraus operators $\{K_k\}$ such that
\begin{equation}\label{eq:generalmap}
    \rho(t) = \sum_k K_k \sigma \otimes\ket{0}\bra{0}_\text{rest} K_k^\dagger\,,
\end{equation}
regardless of the method we apply to couple the input state to the reservoir or the reservoir dynamics. So, Eq.~\eqref{eq:affinemap} also holds in general QRP, where $\mathcal{T}(\hat{o}_i) = \bra{0} \sum_k K_k^\dagger \hat{o}_i K_k \ket{0}_\text{rest}$.
We observe that QRP is equivalent to a POVM measurement on the input state, with ancilla qubits prepared in the $\ket{0}$ state.
Define the dynamics matrix
\begin{equation}
    \bm{\mathcal{T}} = \bigl[\llangle\mathcal{T}(\hat{o}_1)|;\,\llangle\mathcal{T}(\hat{o}_2)|;\,\dots;\,\llangle\mathcal{T}(\hat{o}_{d^2})|\bigr]\,,
\end{equation}
then for an arbitrary input state $\sigma$,
\begin{equation}\label{eq:A4}
    \bar{X}_\sigma = \bm{\mathcal{T}}|\sigma\rrangle\,.
\end{equation}
We note that for pair-wise interaction, the columns of $\bm{\mathcal{T}}$ has a tensor product structure. Since the pair-wise evolution operator for $n$-qubit input states satisfies $\hat{U} (t) = \bigotimes_{i=1}^{n} \hat{U}_\text{p.w.} (t)$, we have
\begin{equation}
\begin{aligned}
    \mathcal{T}(\hat{C}_{S})& = \bra{0} \hat{U} (t)\hat{C}_{S}\hat{U}^\dagger (t) \ket{0}_\text{rest}
    = \bra{0}  \hat{U} (t)
    \prod_{i\in S}  \frac{\openone-\varsigma_{i}^z}{2}   \prod_{j\notin S}  \frac{\openone+\varsigma_{j}^z}{2}   \hat{U}^\dagger (t)
    \ket{0}_\text{rest} = \bigotimes_{i=1}^n \mathcal{T}_i(\hat{C}_{S})\,,
\end{aligned}
\end{equation}
where 
\begin{equation}
    \mathcal{T}_i(\hat{C}_{S}) = \bra{0} \hat{U}_\text{p.w.} (t) \mathcal{M}_i \hat{U}_\text{p.w.}^\dagger (t) \ket{0}
\end{equation}
is a $2\times 2$ matrix, $\mathcal{M}_i$ is one of the two-qubit Pauli-$Z$ projection operators acting on the pair of interacting reservoir nodes that connect to the $i$-th qubit of the input state, and $\bra{0}\cdot\ket{0}$ acts on the reservoir node that doesn't connect to the input state directly. There are only four distinct elements $\{\mathcal{T}_1, \mathcal{T}_2, \mathcal{T}_3, \mathcal{T}_4\}$ in the set $\{\mathcal{T}_i(\hat{o}_1),\,\dots,\,\mathcal{T}_i(\hat{o}_{d^2})\}$, which correspond to the four different two-qubit Pauli-$Z$ projections $\{\mathcal{M}_i\}$. In fact, 
\begin{equation}\label{eq: ApwTM}
    \bm{\mathcal{T}} = \bigotimes_{i=1}^{n} \bm{\mathcal{T}}_\text{p.w.}\,,
\end{equation}
where $\bm{\mathcal{T}}_\text{p.w.}$ is the pair-wise dynamics matrix for a single qubit input state, i.e., $\bm{\mathcal{T}}_\text{p.w.} = \bigl[\llangle\mathcal{T}_1|;\,\llangle\mathcal{T}_2|;\,\llangle\mathcal{T}_3|;\,\llangle\mathcal{T}_4|\bigr]$.
For general QRP in Eq.~\eqref{eq:generalmap}, the tensor product structure of training data $\bm{\mathcal{T}}$ persists if the coupling of the input state to the reservoir and the reservoir dynamics are \emph{local}, i.e., each constituent of the input system undergoes independent dynamical processes.
The training process ascertains that the set of operators $\{\mathcal{T}_1, \mathcal{T}_2, \mathcal{T}_3, \mathcal{T}_4\}$ is tomographically complete, i.e., $\bm{\mathcal{T}}_\text{p.w.}$ is invertible, then $\bm{\mathcal{T}}$ is also invertible.
So the reservoir transformation is an affine map from the state space to the $(d^2-1)$-dimensional region $\mathcal{C}$. Also, we note that Eq.~\eqref{eq:A4} indicates
\begin{equation}\label{eq:A7}
    \lambda \bar{X}_{\sigma_0} + (1-\lambda) \bar{X}_{\sigma_1} = \bar{X}_{\lambda {\sigma_0} + (1-\lambda) {\sigma_1}}\,,
\end{equation}
where $1\ge\lambda\ge0$. Since $\bar{X}_{\lambda {\sigma_0} + (1-\lambda) {\sigma_1}}\in\mathcal{C}$, $\mathcal{C}$ is a convex region.
\end{proof}

The basic QRPE model behind various scenarios that we consider in this work is as follows:
\begin{theorem}\label{thm:scheme}
For an unknown state $\sigma$ in a $d$-dimensional Hilbert space, one can linearly combine the readout operators $\{\hat{o}_i\}$ of an ${N_\text{node} = 2\log_2(d)}$ reservoir into unbiased estimators of arbitrary properties $\{\mathcal{O}_i\}$, if supplemented with the training data from $d^2$ linearly independent states.  
\end{theorem}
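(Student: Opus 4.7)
The plan is to reduce the theorem to an invertibility statement about the dynamics matrix $\bm{\mathcal{T}}$ using Lemma~\ref{lemma:affinemap}, and then construct the weight vectors explicitly. First, I would count dimensions: with $N_\text{node}=2\log_2 d$ reservoir nodes, the commuting readouts $\{\hat{o}_i\}$ number $2^{N_\text{node}}=d^2$, which matches the real dimension of the Hermitian operator space on $\mathcal{H}_d$. This matching is the minimal count compatible with tomographic completeness, and it justifies why $2\log_2 d$ nodes is exactly what the theorem requires.

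Second, I would invoke the affine representation $\bar{X}_\sigma = \bm{\mathcal{T}}|\sigma\rrangle$ from Lemma~\ref{lemma:affinemap}. Assembling the $d^2$ training states into $\bm{M}_\text{t} = [|\varrho_1\rrangle,\dots,|\varrho_{d^2}\rrangle]$ and their measured readouts into $\bm{X}_\text{t}$, linearity of Eq.~\eqref{eq:A3} yields $\bm{X}_\text{t} = \bm{\mathcal{T}}\,\bm{M}_\text{t}$. Linear independence of the training states makes $\bm{M}_\text{t}$ invertible, so $\bm{\mathcal{T}} = \bm{X}_\text{t}\,\bm{M}_\text{t}^{-1}$ is entirely fixed by the training data; moreover $\bm{\mathcal{T}}$ is invertible iff $\bm{X}_\text{t}$ is full rank.

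Third, once $\bm{\mathcal{T}}^{-1}$ exists, I would simply set $W_i = \llangle\mathcal{O}_i|\bm{\mathcal{T}}^{-1}$ for each target observable and check
\begin{equation}
W_i\cdot\bar{X}_\sigma = \llangle\mathcal{O}_i|\bm{\mathcal{T}}^{-1}\bm{\mathcal{T}}|\sigma\rrangle = \llangle\mathcal{O}_i|\sigma\rrangle = \tr(\mathcal{O}_i\sigma)\,,
\end{equation}
so that $\sum_j (W_i)_j\,\hat{o}_j$ is a single linear combination of readout operators whose expectation value is the desired parameter, for any state $\sigma$ and every $\mathcal{O}_i$ simultaneously. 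The whole construction is non-adaptive in $\mathcal{O}_i$, matching the ``measure first, ask questions later'' philosophy.

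The main obstacle is verifying that $\bm{X}_\text{t}$ (equivalently $\bm{\mathcal{T}}$) really does have full rank; this is a constraint on the reservoir dynamics, not on the training states. One needs the effective POVM $\{\mathcal{T}(\hat{o}_i)\}$ identified in the proof of Lemma~\ref{lemma:affinemap} to span the Hermitian operators on $\mathcal{H}_d$. I would exploit the tensor-product factorization $\bm{\mathcal{T}} = \bigotimes_{i=1}^n \bm{\mathcal{T}}_\text{p.w.}$ to reduce the global full-rank question to invertibility of the $4\times 4$ pair-wise matrix $\bm{\mathcal{T}}_\text{p.w.}$, and then argue that generic choices of $J,P_{1,2},E_{1,2}$ and $t$ (such as the setting reported in Fig.~\ref{fig:RandFid}) make $\bm{\mathcal{T}}_\text{p.w.}$ invertible, a condition that can be checked explicitly on a $4\times 4$ numerical matrix.
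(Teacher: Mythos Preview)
Your proposal is correct and matches the linear-algebra derivation that the paper itself gives immediately after its main proof (Eqs.~\eqref{eq:suppWeight}--\eqref{eq:A17}): assemble $\bm{X}_\text{t}=\bm{\mathcal{T}}\bm{M}_\text{t}$, invert, set $W_i=\llangle\mathcal{O}_i|\bm{\mathcal{T}}^{-1}$, and verify $W_i\bar{X}_\sigma=\tr(\mathcal{O}_i\sigma)$. The paper's \emph{primary} proof is phrased geometrically instead: it uses Lemma~\ref{lemma:affinemap} to argue that the $d^2$ training points form a $(d^2-1)$-simplex in $\mathcal{C}$, expands $\bar{X}_\sigma$ in barycentric coordinates $\{c_k^\sigma\}$ with respect to that simplex, and then shows $W\cdot\bar{X}_\sigma=\sum_k c_k^\sigma\tr(\mathcal{O}\varrho_k)=\tr(\mathcal{O}\sigma)$ via the affine correspondence $\sigma=\sum_k c_k^\sigma\varrho_k$. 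Your route is more direct and avoids the simplex language; the paper's geometric version makes the role of the training states as an affine frame more explicit, but the two arguments are equivalent in content, and the paper explicitly presents yours as an alternative.
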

\begin{proof}
Firstly we note that the $d^2$ training points $\bigl\{\bar{X}_{\ket{\varphi_k}\bra{\varphi_k}}\bigr\}$ are not confined to any $(d^2-2)$-dimensional subspace, i.e., they form a $(d^2-1)$-simplex~\cite{Geometry}.
Linear independence indicates that the training states in Eq.~\eqref{eq:trainingstate} form a $(d^2-1)$-simplex in the state space, then Lemma~\ref{lemma:affinemap} shows that the training points also form a $(d^2-1)$-simplex in $\mathcal{C}$.
Thus, an arbitrary $d^2$-dimensional target vector $Y^\text{tar}$ in Eq.~\eqref{eq:tragetvector} can be realized by taking the inner product of the training vectors $\bigl\{\bar{X}_{\ket{\varphi_k}\bra{\varphi_k}}\bigr\}$ with a unique weight vector $W$, which is acquired in the training phase.
Also, one gets
\begin{equation}
    \bar{X}_\sigma = \sum_{k = 1}^{d^2} c_k^\sigma \bar{X}_{\ket{\varphi_k}\bra{\varphi_k}}\,,
\end{equation}
where $\{c_k^\sigma\}$ is the set of unique barycentric coordinates for $\bar{X}_\sigma$. 
We have
\begin{equation}\label{eq:proofofUbsdEst}
\begin{aligned}
    W\cdot \bar{X}_\sigma &= \sum_{k = 1}^{d^2} c_k^\sigma W\cdot \bar{X}_{\ket{\varphi_k}\bra{\varphi_k}} = \tr\Bigl(\mathcal{O}\sum_{k = 1}^{d^2} c_k^\sigma \ket{\varphi_k}\bra{\varphi_k}\Bigr) = \tr(\mathcal{O}\sigma)\,,
\end{aligned}
\end{equation}
which proves that $\Omega$ in Eq.~\eqref{eq:estimator} is an unbiased estimator of $\mathcal{O}$. Note that Lemma~\ref{lemma:affinemap} draws forth the relation
\begin{equation}
    \bar{X}_\sigma = \sum_{k = 1}^{d^2} c_k^\sigma \bar{X}_{\ket{\varphi_k}\bra{\varphi_k}} \Leftrightarrow \sigma = \sum_{k = 1}^{d^2} c_k^\sigma \ket{\varphi_k}\bra{\varphi_k}\,,
\end{equation}
then the last equality in Eq.~\eqref{eq:proofofUbsdEst} follows.
\end{proof}
The same results can be drawn with linear algebra. In the training phase, the $k$-th element of the target vector is
\begin{equation}
    Y^\text{tar}_k = \llangle\mathcal{O}|\varrho_k\rrangle\,,
\end{equation}
where $\varrho_k = \ket{\varphi_k}\bra{\varphi_k}$ is the density matrix of the $k$-th training state. Denote
\begin{equation}
    \bm{M}_\text{t} = \bigl[|\varrho_1\rrangle,\,|\varrho_2\rrangle,\,\dots,\,|\varrho_{d^2}\rrangle\bigr]\,,
\end{equation}
which is an invertible matrix, then
\begin{equation}
    Y^\text{tar} = \llangle\mathcal{O}| \bm{M}_\text{t}\,.
\end{equation}
Note that the matrix of training data satisfies
\begin{equation}
\begin{aligned}
      \bm{X}_\text{t} &= \bigl[\bar{X}_{\varrho_1},\,\bar{X}_{\varrho_2},\,\dots,\,\bar{X}_{\varrho_{d^2}}\bigr]
      = \bm{\mathcal{T}} \bm{M}_\text{t}\,.
\end{aligned}
\end{equation}
Thus, the weight vector is
\begin{equation}\label{eq:suppWeight}
\begin{aligned}
W &= Y^\text{tar}{\bm{X}_\text{t}}^{-1}
= \llangle\mathcal{O}|\bm{\mathcal{T}}^{-1}\,,
\end{aligned}
\end{equation}
where
\begin{equation}\label{eq:A16}
    \bm{\mathcal{T}} = {\bm{X}_\text{t}}\bm{M}_\text{t}^{-1}\,.
\end{equation}
Finally, we obtain
\begin{equation}\label{eq:A17}
    W \bar{X}_\sigma = \llangle\mathcal{O}|\sigma\rrangle = \tr(\mathcal{O}\sigma)\,.
\end{equation}

In the training of pair-wise interacting reservoirs, we do not require the entire matrix $\bm{\mathcal{T}}$, instead, we obtain the pair-wise matrix $\bm{\mathcal{T}}_\text{p.w.}$, and decompose the $n$-qubit observable $\mathcal{O}$ into a tensor product structure:
\begin{equation}
    \mathcal{O} = \sum_i \bigotimes_{j=1}^n \mathcal{O}_{i,j}\,,
\end{equation}
then the weight vector is
\begin{equation}
\begin{aligned}
W &= \llangle\mathcal{O}|\bm{\mathcal{T}}^{-1} =  \sum_i \bigotimes_j \mathcal{W}_{i,j} \,,
\end{aligned}
\end{equation}
where $ \mathcal{W}_{i,j} = \llangle\mathcal{O}_{i,j}| \bm{\mathcal{T}}_\text{p.w.}^{-1}$.
To reduce the complexity in classical post-processing, one only stores the set of two-node weights $\{\mathcal{W}_{i,j}\}$. Note that the single copy readout $X$ shares the same tensor product structure, $X = \bigotimes_j \mathcal{X}_j$, where $\mathcal{X}_j$ is a $4\times 1$ dimensional vector. 
Denote
\begin{equation}
    \hat{\Omega}_{i,j} = \mathcal{W}_{i,j}\cdot \mathcal{X}_j\,,
\end{equation}
then the single-copy estimator is 
\begin{equation}
    \hat{\Omega} = \sum_i \prod_j \hat{\Omega}_{i,j}\,.
\end{equation}
Thus, the classical storage consumption, i.e., $\{\mathcal{W}_{i,j}\}$ and $\{\mathcal{X}_j\}$, scales linearly with the number of qubits for observables that can be expressed as low-rank tensors.

\section{Efficiency for estimating linear functions}\label{supp:B}
Here we provide a method for analyzing the efficiency of QRPE in the worst-case scenario, which relies on the prior knowledge of the training data and the target properties.

\subsection{Single-snapshot estimator}
For a single input copy $\sigma$, the corresponding readouts can be combined into unbiased estimators of compatible properties. The efficiency of the QRPE scheme is characterized by the variance of the estimators.

\begin{lemma}\label{lemma: Var}
For a property $\mathcal{O}$, construct the single-snapshot estimator $\hat{\Omega}$ with the training data. Then in the worst-case scenario, we have
\begin{equation}
    \max_{\sigma}\text{Var}(\hat{\Omega}) \le F_\text{res}({\mathcal{O}},{\bm{X}_\text{t}})\,,
\end{equation}
where the function $F_\text{res}$ depends on the training data:
\begin{equation}
    F_\text{res}(\mathcal{O},{\bm{X}_\text{t}}) = \max_{\substack{\sigma\succeq 0\\ \tr(\sigma) = 1}} W\odot W \bar{X}\,.
\end{equation}
\end{lemma}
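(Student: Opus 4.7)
The plan is to unpack the definition of $\hat{\Omega}$ as a function of the one-hot snapshot $\hat{X}$, compute its second moment explicitly, and then bound the variance by discarding a non-negative term. Recall that if on input $\sigma$ the reservoir yields readout $\hat{o}_j$ with probability $\langle \hat{o}_j\rangle_\sigma$, then $\hat{X}$ is a random vector supported on the standard basis $\{e_j\}$ with $\Pr(\hat{X}=e_j)=\langle\hat{o}_j\rangle_\sigma$. Crucially, each $\hat{X}$ satisfies $\hat{X}_j^2 = \hat{X}_j$ and $\hat{X}_j \hat{X}_k = 0$ for $j\neq k$.

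First I would write $\hat{\Omega}^2 = (W\cdot\hat{X})^2 = \sum_{j,k} w_j w_k \hat{X}_j \hat{X}_k = \sum_j w_j^2 \hat{X}_j = (W\odot W)\cdot\hat{X}$, using the one-hot property. Taking expectation over the randomness of $\hat{X}$ (at fixed $\sigma$) gives
\begin{equation}
    \mathbb{E}(\hat{\Omega}^2) = (W\odot W)\cdot \bar{X}_\sigma,
\end{equation}
since $\mathbb{E}(\hat{X})=\bar{X}_\sigma$ by construction. Second, from Theorem~\ref{thm:scheme} (see also Eq.~\eqref{eq:A17}) the estimator is unbiased, so $\mathbb{E}(\hat{\Omega}) = \tr(\mathcal{O}\sigma)$.

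Combining these two ingredients,
\begin{equation}
    \text{Var}(\hat{\Omega}) = (W\odot W)\cdot\bar{X}_\sigma - \bigl[\tr(\mathcal{O}\sigma)\bigr]^2 \,\le\, (W\odot W)\cdot\bar{X}_\sigma,
\end{equation}
where the inequality is obtained by dropping the manifestly non-negative squared-mean term. Maximizing the right-hand side over all admissible input states $\sigma\succeq 0$ with $\tr(\sigma)=1$ yields exactly $F_\text{res}(\mathcal{O},\bm{X}_\text{t})$ as defined in the statement, since the dependence on the training data enters through $W = \llangle\mathcal{O}|\bm{\mathcal{T}}^{-1}$ via Eq.~\eqref{eq:suppWeight} and Eq.~\eqref{eq:A16}.

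There is really no hard step here; the argument is essentially bookkeeping, once one observes the one-hot structure of $\hat{X}$ that collapses the cross terms in $\hat{\Omega}^2$. The only subtlety worth flagging is that the bound is tight precisely when $\tr(\mathcal{O}\sigma)=0$ at the maximizer, so the looseness introduced by discarding the squared-mean term can be mitigated (as noted in the main text) by working with the traceless part of $\mathcal{O}$, which leaves the variance invariant but can shrink $(W\odot W)\cdot\bar{X}_\sigma$.
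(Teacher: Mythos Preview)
Your proof is correct and follows essentially the same route as the paper. The only cosmetic difference is that the paper expands $\text{Var}(\sum_i w_i\hat{x}_i)$ via variances and covariances of the Bernoulli indicators (using $\text{Cov}(\hat{x}_i,\hat{x}_j)=-\langle\hat{o}_i\rangle\langle\hat{o}_j\rangle$ from mutual exclusivity) to reach $W\odot W\,\bar{X}-(W\bar{X})^2$, whereas you compute the second moment $\mathbb{E}(\hat{\Omega}^2)=(W\odot W)\cdot\bar{X}$ directly from the one-hot identity $\hat{X}_j\hat{X}_k=\delta_{jk}\hat{X}_j$; both then drop the squared-mean term and note saturation at $\tr(\mathcal{O}\sigma)=0$.
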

\begin{proof}
The single-snapshot estimator $\hat{\Omega}$ is a linear combination of the measurement results of mutually exclusive readout operators $\{\hat{o}_i\}$ with the corresponding weights $\{w_i\}$, i.e.,
\begin{equation}\label{eq:estimator}
    \hat{\Omega} \equiv  W\cdot \hat{X} = \sum_{i = 1}^{N_\text{read}}{w_i \hat{x}_i}\,.
\end{equation}
Thus, the expectation of $\hat{\Omega}$ is the inner product between $W$ and $\bar{X}$. 
The variance of $\hat{\Omega}$ is
\begin{equation}%\label{appeq: Var}
  \begin{aligned}
    \text{Var}(\hat{\Omega}) &= \text{Var}\Bigl(\sum_i{w_i\hat{x}_i}\Bigr) = W\odot W \bar{X} - (W \bar{X})^2\,,
  \end{aligned}
\end{equation}
where $\odot$ represents the Hadamard product.
Thus,
\begin{equation}\label{eq: Varupbdopt}
\begin{aligned}
    \max_{\bar{X}\in\mathcal{C}}{\text{Var}(\hat{\Omega})} &\le  \max_{\substack{\sigma\succeq 0\\ \tr(\sigma) = 1}} W\odot W \bar{X}\,,
\end{aligned}
\end{equation}
where we have neglected the term $-({W \bar{X}})^2$.
The inequality saturates when $\tr(\mathcal{O}\sigma) = 0$.
\end{proof}
Since we have $W = \llangle\mathcal{O}|{\bm{\mathcal{T}}}^{-1}$ and $X = {\bm{\mathcal{T}}}|\sigma\rrangle$, we can directly compute the factor ${F}_\text{res}$.
Define an operator $\mathcal{B}$ such that
\begin{equation}
\llangle\mathcal{B}|{\bm{\mathcal{T}}}^{-1} = \llangle\mathcal{O}|{\bm{\mathcal{T}}}^{-1} \odot \llangle\mathcal{O}|{\bm{\mathcal{T}}}^{-1}\,,
\end{equation}
then 
\begin{equation}
    {F}_\text{res} = \max_{\substack{\sigma\succeq 0\\ \tr(\sigma) = 1}}\tr(\mathcal{B}\sigma) = || \mathcal{B}||_\infty
\end{equation}
where $||\cdot||_\infty$ represents the spectral norm.
Also, it is worth noting that estimating identity operator doesn't affect the variance. Define $\hat{\Omega}' = (W+c W_{\openone}) \hat{X}$,
where $W_{\openone} = [1,1,\dots,1]$ is the weight operator for the identity observable, and $c\in \mathbb{R}$.
Since $c_1 W_{\mathcal{O}_1} + c_2 W_{\mathcal{O}_1} = 
W_{c_1 \mathcal{O}_1+ c_2 \mathcal{O}_2}$,
$\hat{\Omega}'$ is the estimator for $\mathcal{O}' = \mathcal{O} + c\openone$. 
Note that $\text{Var}(\hat{\Omega}) = \mathbb{E}[(\hat{\Omega} - \mathbb{E}(\hat{\Omega}))^2]$,
and by construction $\sum_i \hat{x}_i = 1$,
we have $\text{Var}(\hat{\Omega}') = \text{Var}(\hat{\Omega})$. 
We observe that $\hat{\Omega}_0$ results in a tighter worst-case variance upper bound on average in fidelity estimation of random pure target states.
Also, for properties that only differs by a product factor $c$,  \begin{equation}
     F_\text{res}(c\mathcal{O},{\bm{X}_\text{t}}) = c^2 F_\text{res}(\mathcal{O},{\bm{X}_\text{t}})\,.
\end{equation}
With these properties, we note that if a reservoir setting works relatively better for overlap estimation of a target state $\sigma$, one could expect that it works better for estimating arbitrary linear properties that can be decomposed as
\begin{equation}\label{eq: LinearObsRelation}
    \mathcal{O} = c_1\sigma + c_2 \openone\,,\quad c_{1,2}\in \mathbb{R}\,.
\end{equation}
An arbitrary normal matrix is related to a density matrix by
Eq.~\eqref{eq: LinearObsRelation}.
Thus, a well trained reservoir setting for overlap estimation of random target states is applicable to the future estimation of multiple unknown normal matrices. 
In order to analyze the search of good reservoir parameters, we fix the reservoir parameters $\{J, P_i, E_i\}$ and check the dependency of reservoir performance on evolution time. For the evolution time from $t = 0$ to $t = 5.5$, the variance upper bound of single-qubit state fidelity estimation averaged for 300 random target pure states varies, as shown in Fig.~\ref{fig:Var_T}.
It can be observed that the average variance upper bound is stable around some local minimums in a short time window.

\begin{figure}[t]
    \includegraphics[width=.7\columnwidth]{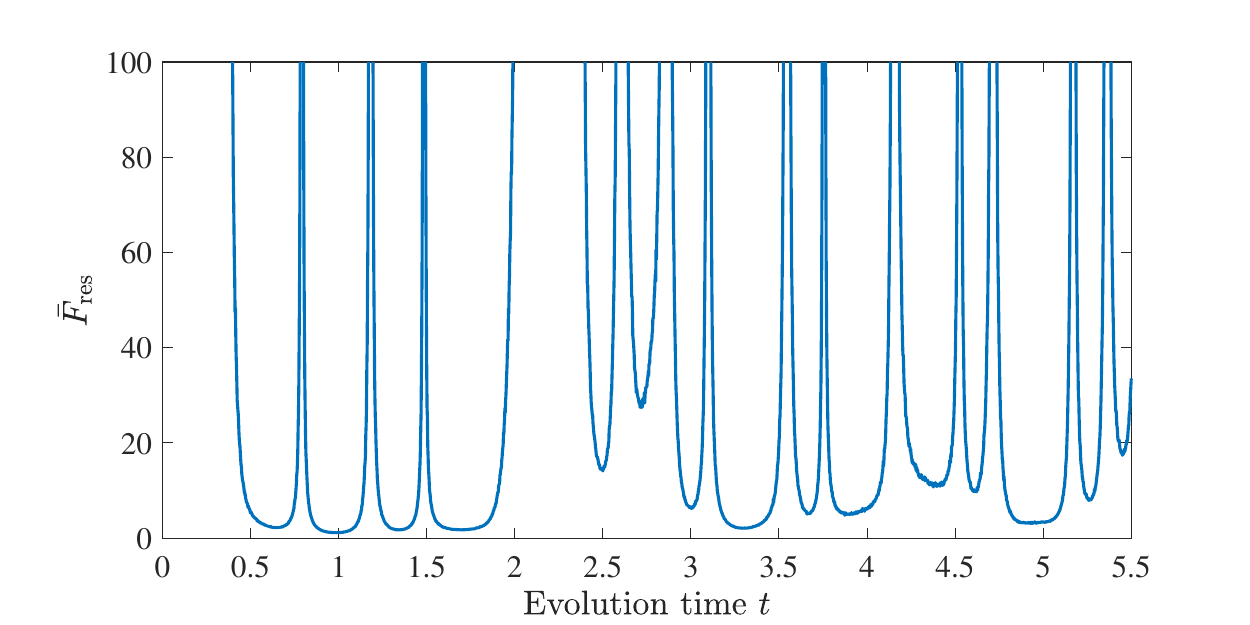}
    \caption{ Average variance upper bound of single-qubit state fidelity estimation that varies with evolution time from $t = 0$ to $t = 5.5$. The parameters $J, P_i, E_i$ are the same with main text. At each evolution time, the variance upper bound for pure state fidelity estimation $\bar{F}_\text{res}$ is averaged over 1000 random target states. While there are multiple local minimums with different slopes at neighboring points, the performance of reservoir parameters remains stable for a short time period at some local minimums.
    }
    \label{fig:Var_T}
\end{figure}

Furthermore, we could analyze the efficiency scaling of estimating $k$-local tensor product observables with the help of pair-wise reservoir dynamics. Here we present the detailed proof for Theorem 2.
\begin{lemma}[Theorem 2]
    For a $k$-local tensor product property, e.g.
    \begin{equation}
    \mathcal{O} = \bigotimes_{i = 1}^k \mathcal{O}_i \bigotimes_{i = k+1}^n \openone_i\,.
\end{equation}
The worst-case variance upper bound is the product of that of each local properties, i.e.,
\begin{equation}
    || \mathcal{B}||_\infty 
    = \prod_{i = 1}^k || \mathcal{B}_i||_\infty\,.
\end{equation}
\end{lemma}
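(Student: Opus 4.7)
The plan is to exploit the tensor product structure bequeathed by the pair-wise reservoir Hamiltonian and show that this structure is preserved by the defining equation for $\mathcal{B}$, after which the multiplicativity of the spectral norm under tensor products finishes the job. The key algebraic fact I will use is the compatibility of the Hadamard and tensor products: for vectors $a,c$ of the same length and $b,d$ of the same length,
\begin{equation}
(a \otimes b) \odot (c \otimes d) \;=\; (a \odot c) \otimes (b \odot d)\,,
\end{equation}
which follows immediately from comparing coordinates indexed by pairs $(i,j)$.

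First, I will write the $k$-local observable in the uniform form $\mathcal{O}=\bigotimes_{i=1}^n \tilde{\mathcal{O}}_i$ with $\tilde{\mathcal{O}}_i=\mathcal{O}_i$ for $i\le k$ and $\tilde{\mathcal{O}}_i=\openone$ for $i>k$. Using $\bm{\mathcal{T}}=\bigotimes_i \bm{\mathcal{T}}_\text{p}$ (and hence $\bm{\mathcal{T}}^{-1}=\bigotimes_i \bm{\mathcal{T}}_\text{p}^{-1}$) together with the multiplicativity of the Liouville vectorization over tensor products, I obtain
\begin{equation}
\llangle\mathcal{O}|\bm{\mathcal{T}}^{-1} \;=\; \bigotimes_{i=1}^n \llangle\tilde{\mathcal{O}}_i|\bm{\mathcal{T}}_\text{p}^{-1}\,.
\end{equation}
Applying the Hadamard identity above inductively across the $n$ factors, the Hadamard square of this row vector is again a tensor product:
\begin{equation}
\llangle\mathcal{O}|\bm{\mathcal{T}}^{-1}\odot\llangle\mathcal{O}|\bm{\mathcal{T}}^{-1} \;=\; \bigotimes_{i=1}^n \bigl(\llangle\tilde{\mathcal{O}}_i|\bm{\mathcal{T}}_\text{p}^{-1}\odot\llangle\tilde{\mathcal{O}}_i|\bm{\mathcal{T}}_\text{p}^{-1}\bigr) \;=\; \bigotimes_{i=1}^n \llangle\tilde{\mathcal{B}}_i|\bm{\mathcal{T}}_\text{p}^{-1}\,,
\end{equation}
where $\tilde{\mathcal{B}}_i$ is the local operator defined through the single-pair relation in the statement. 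Comparing with the definition $\llangle\mathcal{B}|\bm{\mathcal{T}}^{-1}=\bigotimes_i \llangle\tilde{\mathcal{B}}_i|\bm{\mathcal{T}}_\text{p}^{-1}$ and right-multiplying by $\bm{\mathcal{T}}=\bigotimes_i \bm{\mathcal{T}}_\text{p}$ cleans up to $\llangle\mathcal{B}|=\bigotimes_i \llangle\tilde{\mathcal{B}}_i|$, i.e.\ $\mathcal{B}=\bigotimes_{i=1}^n \tilde{\mathcal{B}}_i$ as operators.

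It remains to handle the trivial sites $i>k$. Since every input state gives $\sum_i \langle\hat{o}_i\rangle_\sigma=\tr(\sigma)$, the all-ones row satisfies $[1,1,1,1]\,\bm{\mathcal{T}}_\text{p}=\llangle\openone|$, and hence $\llangle\openone|\bm{\mathcal{T}}_\text{p}^{-1}=[1,1,1,1]$. This vector is idempotent under the Hadamard square, so $\tilde{\mathcal{B}}_i=\openone$ for $i>k$, contributing $\|\openone\|_\infty=1$ to the product. Finally, applying the multiplicativity of the spectral norm under tensor products gives
\begin{equation}
\|\mathcal{B}\|_\infty \;=\; \prod_{i=1}^n \bigl\|\tilde{\mathcal{B}}_i\bigr\|_\infty \;=\; \prod_{i=1}^k \|\mathcal{B}_i\|_\infty\,,
\end{equation}
which is the claim. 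The main obstacle I anticipate is purely bookkeeping: one must be careful that the indexing convention under which Hadamard products are taken on the $\{\hat{o}_i\}$ basis is consistent with the tensor ordering inherited from the pair-wise factorization of $\bm{\mathcal{T}}$; once that alignment is verified (which is immediate from the product form of the readout basis in Eq.~(3)), every step is routine.
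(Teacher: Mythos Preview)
Your proposal is correct and follows essentially the same approach as the paper's proof: both exploit the tensor factorization $\bm{\mathcal{T}}=\bigotimes_i \bm{\mathcal{T}}_\text{p}$, use the compatibility of Hadamard and tensor products to factor $W\odot W$, observe that identity sites contribute the all-ones weight vector (which is Hadamard-idempotent and corresponds to $\mathcal{B}_i=\openone$), and finish with multiplicativity of the spectral norm. If anything, you are slightly more explicit about the Hadamard--tensor identity and the treatment of the trivial sites than the paper, which carries out the same computation through the variational form $\max_\sigma W\odot W\,\bar X$.
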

\begin{proof}
    Suppose the property $\mathcal{O}$ has $k$-local tensor product structure, without loss of generality, we have
\begin{equation}
    \mathcal{O} = \bigotimes_{i = 1}^k \mathcal{O}_i \bigotimes_{i = k+1}^n \openone_i\,.
\end{equation}
Note that $W = \llangle\mathcal{O}|{\bm{\mathcal{T}}}^{-1}$, $\bar{X} = \bm{\mathcal{T}}|\sigma\rrangle$ and $\bm{\mathcal{T}} = \bigotimes_{i = 1}^n \bm{\mathcal{T}}_\text{p}$, there is
\begin{equation}
    \begin{aligned}
    \max_{\substack{\sigma\succeq 0\\ \tr(\sigma) = 1}} W\odot W \bar{X} & = \max_{\substack{\sigma\succeq 0\\ \tr(\sigma) = 1}} \bigotimes_{i = 1}^k  \llangle\mathcal{O}_i|{\bm{\mathcal{T}}_\text{p}}^{-1} \odot \llangle\mathcal{O}_i|{\bm{\mathcal{T}}_\text{p}}^{-1} \bigotimes_{i = k+1}^n W_{\openone_i} \bm{\mathcal{T}}|\sigma\rrangle\,.
    \end{aligned}
\end{equation}
Suppose the observable corresponding to the weight operator $\llangle\mathcal{O}_i|{\bm{\mathcal{T}}_\text{p}}^{-1} \odot \llangle\mathcal{O}_i|{\bm{\mathcal{T}}_\text{p}}^{-1}$ is $\mathcal{B}_i$, i.e.,
\begin{equation}
    \llangle\mathcal{O}_i|{\bm{\mathcal{T}}_\text{p}}^{-1} \odot \llangle\mathcal{O}_i|{\bm{\mathcal{T}}_\text{p}}^{-1} = \llangle\mathcal{B}_i|{\bm{\mathcal{T}}_\text{p}}^{-1}\,,
\end{equation}
then
\begin{equation}
    \begin{aligned}
    \max_{\substack{\sigma\succeq 0\\ \tr(\sigma) = 1}} W\odot W \bar{X} 
    = \max_{\substack{\sigma\succeq 0\\ \tr(\sigma) = 1}}  \tr\bigl(\bigotimes_{i = 1}^k \mathcal{B}_i \bigotimes_{i = k+1}^n \openone_i \sigma \bigr) = \prod_{i = 1}^k || \mathcal{B}_i||_\infty\,,
    \end{aligned}
\end{equation}
where $\max_{\substack{\sigma\succeq 0\\ \tr(\sigma) = 1}} W\odot W \bar{X}$ is the worst-case variance upper bound for the single-snapshot estimator.
\end{proof}

\subsection{Multi-snapshot estimator}\label{supp:MoM}
The variance can be further suppressed by combining single-snapshot estimators with statistical methods. The median of means method is expected to reduce the effect of outliers~\cite{Huang2020,RSPRXQuantum.2.030348}, the efficiency of which is given by the following theorem.
\begin{lemma}\label{lemma:MOM}
For target properties $\{\mathcal{O}_i| i = 1,\,2,\,\dots,\,M\}$ and training data ${\bm{X}_\text{t}}$, set 
\begin{equation}
    N_\text{sample} = {68}/{\epsilon^2} \ln(2M/\delta)\max_{i} F_\text{res}({\mathcal{O}_i},{\bm{X}_\text{t}})\,.
\end{equation}
Then, consuming $N_\text{sample}$ i.i.d. copies of input states suffice to construct median of means estimators $\{\hat{\Omega}_{\text{MoM},i}\}$, satisfying
\begin{equation}
    \bigl|\hat{\Omega}_{\text{MoM},i} - \tr(\mathcal{O}_i\sigma)\bigr|\le\epsilon\quad\text{for}\quad i = 1,\,2,\dots,\,M\,,
\end{equation}
with probability no less than $1-\delta$.
\end{lemma}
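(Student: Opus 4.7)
The plan is to adapt the textbook median-of-means argument to the reservoir single-snapshot estimators and then close with a union bound. First I would partition the $N_\text{sample}=KN$ i.i.d.\ snapshots into $K$ disjoint batches of size $N$, and, for each target $\mathcal{O}_i$, form the within-batch sample means
\begin{equation}
\hat{\Omega}_{\text{M},i}^{v} = \frac{1}{N}\sum_{j=1}^{N} W_i\cdot\hat{X}_j^v,\qquad v=1,2,\dots,K\,.
\end{equation}
Since the snapshots are i.i.d., each batch mean is unbiased for $\tr(\mathcal{O}_i\sigma)$ and, by Lemma~1, has variance at most $F_*/N$ where $F_* = \max_i F_\text{res}(\mathcal{O}_i,\bm{X}_\text{t})$. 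Chebyshev's inequality then yields
\begin{equation}
\Pr\bigl[\,|\hat{\Omega}_{\text{M},i}^{v} - \tr(\mathcal{O}_i\sigma)|\ge\epsilon\,\bigr] \le \frac{F_*}{N\epsilon^2}\,.
\end{equation}
Choosing $N = 34\,F_*/\epsilon^2$ pins this per-batch failure probability $p$ at $1/34\le 1/4$, which is comfortably below the $1/2$ required for the outer median step.

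Next I would handle the median layer. Let $Y_i$ count the ``bad'' batches on which $|\hat{\Omega}_{\text{M},i}^{v}-\tr(\mathcal{O}_i\sigma)|\ge\epsilon$; the median $\hat{\Omega}_{\text{MoM},i}$ can deviate from the target by more than $\epsilon$ only when $Y_i\ge K/2$. Since $Y_i$ is a sum of $K$ independent Bernoulli variables with parameter at most $p=1/34$, a sharp multiplicative Chernoff bound (equivalently the KL-divergence form of the Chernoff inequality applied to the deviation $1/2-p$) gives
\begin{equation}
\Pr\bigl[\,Y_i\ge K/2\,\bigr] \le \exp\bigl(-K/2\bigr)\,.
\end{equation}
Setting $K = 2\ln(2M/\delta)$ then bounds the per-parameter failure probability by $\delta/(2M)$, and the product $KN$ assembles into the announced total sample count $N_\text{sample} = 68\,\epsilon^{-2}\ln(2M/\delta)\,F_*$, with the numerical constant $68=34\cdot 2$ tracing directly to the Chebyshev and Chernoff thresholds.

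Finally I would close with a union bound over the $M$ targets,
\begin{equation}
\Pr\Bigl[\,\exists\,i\colon\,|\hat{\Omega}_{\text{MoM},i}-\tr(\mathcal{O}_i\sigma)|\ge\epsilon\,\Bigr] \le \sum_{i=1}^M \frac{\delta}{2M} = \frac{\delta}{2}\le\delta\,,
\end{equation}
which is exactly the stated confidence guarantee. The main piece of bookkeeping I expect to wrestle with is confirming that the multiplicative Chernoff step genuinely delivers the $\exp(-K/2)$ decay at $p=1/34$: a naive Hoeffding bound only yields $\exp(-2K(1/2-p)^2)\approx\exp(-0.44K)$, which would inflate the prefactor above $68$. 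Retaining the multiplicative (or equivalently the KL) form of Chernoff is therefore essential to recover the announced constant, and every other step is a direct invocation of Lemma~1 together with the independence of the reservoir snapshots.
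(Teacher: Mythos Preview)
Your proposal is correct and follows essentially the same approach as the paper: partition the snapshots into $K=2\ln(2M/\delta)$ batches, invoke Lemma~\ref{lemma: Var} to bound each batch-mean variance, apply the median-of-means tail bound, and close with a union bound over the $M$ targets. In fact, your write-up is considerably more explicit than the paper's own proof, which simply cites the median-of-means machinery from Ref.~\cite{Huang2020} to obtain the per-parameter failure bound $\delta/M$ without spelling out the Chebyshev--Chernoff steps or the origin of the constant $34$; your observation that the KL/multiplicative Chernoff form (rather than Hoeffding) is needed to secure the $e^{-K/2}$ decay at $p=1/34$ is exactly the detail the paper leaves implicit in that citation.
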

\begin{proof}
Divide the $N_\text{sample}$ readouts into $K$ batches, where
\begin{equation}
    K = 2\ln(2M/\delta)\,.
\end{equation}
Compute the sample mean estimator of each patch:
\begin{equation}
\begin{aligned}
    \hat{\Omega}_{\text{M},i}^{(k)} &= \frac{K}{N_\text{sample}}\sum_{j = 1}^N W_i X_j^{(k)}\,,\quad k = 1,\,2,\,\dots,\, K,
\end{aligned}
\end{equation}
then the median of means estimator is 
\begin{equation}
    \hat{\Omega}_{\text{MoM},i} = \text{Median}( \hat{\Omega}_{\text{M},i}^{(1)},\,\hat{\Omega}_{\text{M},i}^{(2)},\,\dots,\, \hat{\Omega}_{\text{M},i}^{(K)})\,.
\end{equation}
Lemma.~\ref{lemma: Var} and the median of means method~\cite{Huang2020} guarantees that 
\begin{equation}
    \text{Pr}\bigl(|\hat{\Omega}_{\text{MoM},i} - \tr(\mathcal{O}_i\sigma)|>\epsilon\bigr)<\frac{\delta}{M}\,.
\end{equation}
Finally, the union bound shows that the overall confidence level to estimate $M$ properties is no less than $1-\delta$.
\end{proof}

\subsection{Variance reduction via probabilistic time multiplexing}
The dynamic nature of quantum reservoirs is exploited with probabilistic time multiplexing.
For each input copy of the unknown state, the evolution time of the reservoir is randomly chosen from the set $\{t_1,\,t_2,\,\dots,\,t_{N_\text{time}}\}$ with a probability distribution $\{p_1,\,p_2,\,\dots,\,p_{N_\text{time}}\}$. The probability vector of readout operators after probabilistic time multiplexing (PTM) is 
\begin{equation}
    \bar{X}^{\text{ptm}} = \sum_{i = 1}^{N_\text{time}}{p_i \bar{X}(t_i)}\,.
\end{equation}
Solve the following optimization
\begin{equation}
\begin{aligned}
    F(\mathcal{O},{\bm{X}_\text{t}^{\text{opt}}})
    = \min_{\{p_i\}}F(\mathcal{O},{\bm{X}_\text{t}}^{\text{ptm}})\,,
\end{aligned}
\end{equation}
we are likely to reach a better variance upper bound.
Note that PTM does not require changing either the initialization parameters of the reservoir or the measurement setting.
Since the PTM is a one-time optimization, in this work we search the good probability distributions with brute force, which is achieved by the following algorithm.

{\centering
\begin{minipage}{.6\linewidth}
    \begin{algorithm}[H]\label{algo: PTM}
    \caption{{\bf PTM}}
    \begin{algorithmic}
    \vspace*{0.1cm}
    \State Given properties $\{\mathcal{O}_i\,|\,1,\,2,\,\dots,\,M\}$.
    \State Load training data $\{\bm{X}_\text{t}(t_i)\,|\,i = 1,\,2,\,\dots,\,N_\text{time}\}$.
    \State Generate $j_\text{max}$ random probability distributions $\{P_j\}$, where ${P_j = \{p_i^j\,|\,i = 1,\,2,\,\dots,\,N_\text{time}\}}$.
    \State Calculate $F_{\text{res},0} = \max_i F_{\text{res}} \bigl(\mathcal{O}_i,\, \sum_k p_k^1 \bm{X}_\text{t}(t_k)\bigr)$.
    \State Set $P_\text{\tiny{PTM}} = P_1$.
    \vspace*{0.1cm}
    \For {$j = 2,\,3,\,\dots j_\text{max}$}
    \vspace*{0.1cm}
    \State Calculate $F_{\text{res},1} = \max_i F_\text{res}\bigl(\mathcal{O}_i,\,\sum_k p_k^j \bm{X}_\text{t}(t_k)\bigr)$.
    \vspace*{0.1cm}
    \If {$F_{\text{res},1} <F_{\text{res},0}$}
    \State Update $P_\text{\tiny{PTM}} = P_j$ and $F_{\text{res},0} = F_{\text{res},1}$.
    \EndIf
    \EndFor
    \end{algorithmic}
    \end{algorithm}
\end{minipage}
\par
}
One may find better heuristic methods for this optimization.

The QRPE protocol with PTM is altered as:
\begin{enumerate}
  \item Perform a one-time estimation of the training states of a two-node reservoir and load the training data $\{\bm{X}_\text{p}(t_k)\}$ to classical memory.
  \item Given properties $\{\mathcal{O}_i\}$, calculate weights $\{W_i\}$ with the training data. Run the PTM algorithm to obtain $P_\text{\tiny{PTM}}$ and $\text{Var}_0$.  Calculate $N_\text{sample}$ with the given confidence $1-\delta$ and additive error $\epsilon$. 
  \item Process $N_\text{sample}$ i.i.d. copies of the unknown state $\sigma$ with QRP. For each copy the reservoir evolution time is randomly chosen from $\{t_k\}$ with the probability distribution $P_\text{\tiny{PTM}}$. Load the readouts $\{X_i\}$ to classical memory.
  \item Calculate the estimated values $\{\tilde{\mathcal{O}}_i\}$ with the weights and readouts.

\end{enumerate}

Numerical result shows that for random reservoir dynamics, the variance could be suppressed by PTM. For each target state, we use the training data collected at 2 different time points with a random reservoir initialization, and optimize the variance upper bound with PTM. After that, we could achieve a more efficient estimation performance, as shown in Fig.~\ref{fig:TMP}.

\begin{figure}[t]
    \includegraphics[width=.7\columnwidth]{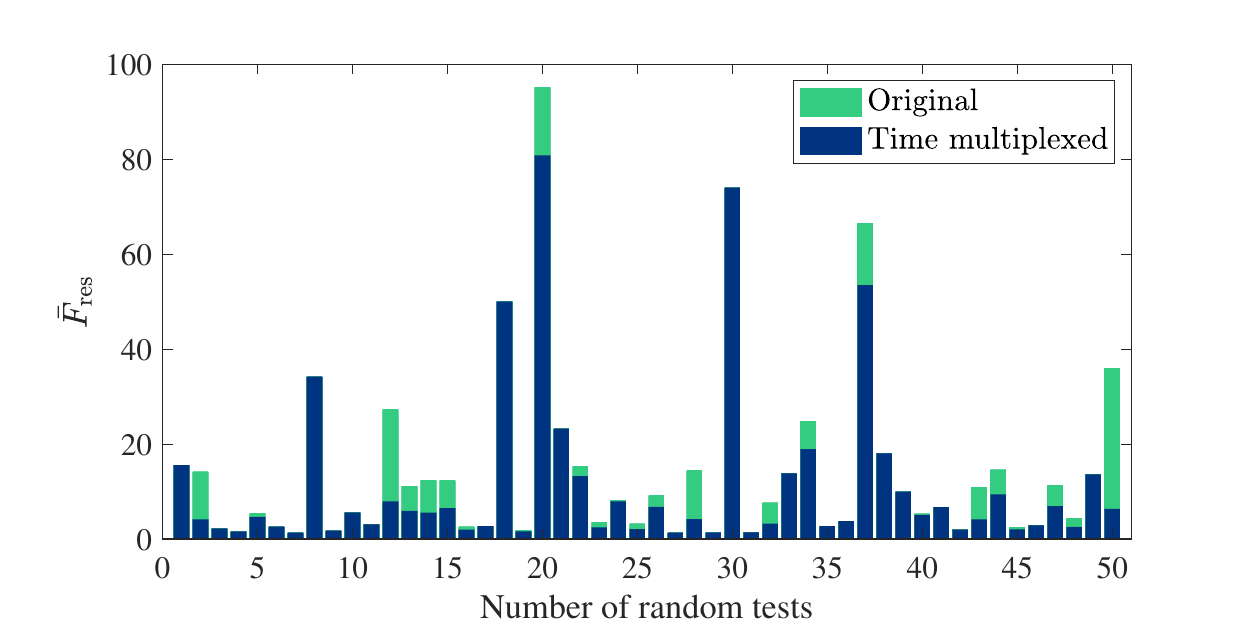}
    \caption{ Variance reduction via PTM. For each random test we use the training data collected at $t = 1$ and $t = 10$ of a two-node reservoir with parameters $J, P_i, E_i$ uniformly generated at random from $[0,5]$. The variance upper bound for overlap estimation $\bar{F}_\text{res}$ is averaged over 300 random target states. The green bars uncovered by blue bars indicate the improvement in average variance upper bound achieved by PTM.
    }
    \label{fig:TMP}
\end{figure}

\section{Estimation of nonlinear functions}
In this section we analyze the estimation of nonlinear functions with the QRPE scheme.
\subsection{Unbiased estimator for nonlinear functions}
The estimation of higher order moments, as stated in the main text, is reduced to estimating a linear function w.r.t. tensor product of the input state $\sigma$, i.e.,
\begin{equation}\label{eq:nonlinObs}
    \tr(\mathcal{O}\sigma^{\otimes m}) =  \llangle\mathcal{O}|\sigma^{\otimes m}\rrangle = \llangle\mathcal{O}|{(|\sigma\rrangle)}^{\otimes m}\,,
\end{equation}
where the second equality follows from the property of trace operation
\begin{equation}
    \tr(A\otimes B) = \tr(A) \tr(B)\,.
\end{equation} 
We estimate the  expectation values of readout operators at an evolution time $t$,
\begin{equation}
\begin{aligned}
      \bm{X} &= \bigl[\bar{X}_{\varrho_1},\,\bar{X}_{\varrho_2},\,\dots,\,\bar{X}_{\varrho_{d^2}}\bigr]
      = \bm{\mathcal{T}} \bm{M}_\text{t}\,.
\end{aligned}
\end{equation}
The combined training data for estimating Eq.~\eqref{eq:nonlinObs} is
\begin{equation}
\begin{aligned}
      \bm{X}^{\otimes m} &= \bm{\mathcal{T}}^{\otimes m} \bm{M}_\text{t}^{\otimes m}\,,
\end{aligned}
\end{equation}
and the target readout vector is
\begin{equation}
    Y^\text{tar} = \llangle\mathcal{O}| \bm{M}_\text{t}^{\otimes m}\,.
\end{equation}
Thus, the weight vector is 
\begin{equation}\label{App: eq: nonlinweight}
\begin{aligned}
W^{(m)} &= Y^\text{tar}{(\bm{X}^{\otimes m})}^{-1}
= \llangle\mathcal{O}|{(\bm{\mathcal{T}}^{\otimes m})}^{-1}\,,
\end{aligned}
\end{equation}
and the readout vector of an arbitrary input state $\sigma$ satisfies
\begin{equation}
    \bar{X}_{\sigma^{\otimes m}} = \bm{\mathcal{T}}^{\otimes m} |\sigma\rrangle^{\otimes m} = \bar{X}_{\sigma}^{\otimes m} \,.
\end{equation}
In conclusion, we obtain
\begin{equation}
    W^{(m)} \bar{X}_{\sigma}^{\otimes m} = \llangle\mathcal{O}|\sigma\rrangle^{\otimes m} = \tr(\mathcal{O}\sigma^{\otimes m})\,.
\end{equation}
For simplicity, we will neglect the superscript of $ W^{(m)} $ when there is no ambiguity.

For pair-wise interaction, the classical resources consumed in the training phase is significantly reduced. Suppose the property is decomposed into
\begin{equation}
    \mathcal{O} = \sum_i \bigotimes_{j = 1}^n \bigotimes_{k = 1}^m \mathcal{O}_{i,j,k}\,,
\end{equation}
then the weight vector is
\begin{equation}
\begin{aligned}
W &= \llangle\mathcal{O}|{(\bm{\mathcal{T}}^{\otimes m})}^{-1} =  \sum_i \bigotimes_j W_{i,j} \,,
\end{aligned}
\end{equation}
where $ W_{i,j} = \bigotimes_{k = 1}^m \llangle\mathcal{O}_{i,j,k}| \bm{\mathcal{T}}_s^{-1}$. Due to the tensor product structure of the single snapshot readout $X$, only $ W_{i,j,k}$ that correspond to nontrivial $\mathcal{O}_{i,j,k}$ are required in the classical post-processing.

\subsection{Variance of U-statistics estimators}
Suppose $N$ copies of $\sigma$ are injected to the reservoir and the corresponding readout vectors are $\{X_i| i=1,\,2,\,\dots N\}$. The uniformly minimal-variance unbiased estimator (U-statistics estimator) for estimating $W\bar{X}_{\sigma}^{\otimes m}$ is
\begin{equation}\label{eq: USest}
    \hat{\Omega}^{(m)} = \frac{1}{P(N,m)}\sum_{i_1,\,i_2,\,\dots,\, i_m}^{*} W \bigotimes_{j = 1}^m X_{i_j}\,,
\end{equation}
where $P(N,m)$ is the $m$-permutations of $N$, $\sum^*$ denotes the summation over all distinct subscripts, i.e., $\{i_1,\,i_2,\,\dots,\, i_m\}$ is an $m$-tuple of indices from the set $\{1,\,2,\,\dots\,N\}$ with distinct entries. The kernel of $\hat{\Omega}$ is a symmetric function
\begin{equation}
    h(X_{i_1},\,X_{i_2},\,\dots,\, X_{i_m}) = \frac{1}{m!}\sum_{\{l_1,\,l_2,\,\dots,\, l_m\}\in \mathcal{P}(i_1,\,i_2,\,\dots,\, i_m)} W \bigotimes_{j = 1}^m X_{l_j}\,,
\end{equation}
where $\mathcal{P}(i_1,\,i_2,\,\dots,\, i_m)$ is the set that contains all permutations of $i_1,\,i_2,\,\dots,\, i_m$. 
Let 
\begin{equation}
    h_k(x_{1},\,x_{2},\,\dots,\, x_{k}) = \mathbb{E}\bigl[ h(x_{1},\,x_{2},\,\dots,\,x_{k},\,X_{k+1},\,\dots,\, X_{m})\bigr]\,,
\end{equation}
The variance for $\hat{\Omega}^{(m)}$ in Eq.~\eqref{eq: USest} is given by the Hoeffding's theorem~\cite{Hoeffding1948}
\begin{equation}
    \text{Var}\bigl(\hat{\Omega}^{(m)}\bigr) = \tbinom{N}{m}^{-1}\sum_{k = 1}^m{\tbinom{m}{k}\tbinom{N-m}{m-k} \text{Var}\bigl[h_k(X_{1},\,X_{2},\,\dots,\, X_{k})\bigr]}\,.
\end{equation}
To estimate quadratic functions, the variance upper bound is given by Lemma S5 of Ref.~\cite{Huang2020}. Here we rephrase it as
\begin{lemma}\label{lemma:U-estVar}
The variance associated with $\hat{\Omega}^{(2)}$ satisfies \begin{equation}
    \text{Var}\bigl(\hat{\Omega}^{(2)}\bigr) \le \frac{8\mathcal{A}^{(2)}}{N} \,,
\end{equation}
where
\begin{equation}
    \mathcal{A}^{(2)} = \max{\Bigl( \text{Var} \bigl[W (X_1\otimes \bar{X})\bigr],\,\text{Var} \bigl[W (\bar{X}\otimes X_2 )\bigr],\,\sqrt{\text{Var} \bigl[W (X_1\otimes X_2)\bigr]}\Bigr)}\,.
\end{equation}
\end{lemma}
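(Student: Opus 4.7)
The plan is to specialize Hoeffding's variance formula for U-statistics to the case $m=2$ and then bound each of the two conditional variances that appear. First I would rewrite $\hat{\Omega}^{(2)}$ in symmetric U-statistic form: since $\sum_{i\ne j} W(X_i\otimes X_j) = \sum_{i<j}[W(X_i\otimes X_j)+W(X_j\otimes X_i)]$, the estimator is $\hat{\Omega}^{(2)} = \binom{N}{2}^{-1}\sum_{i<j} h(X_i,X_j)$ with the symmetrized kernel $h(X_1,X_2) = \tfrac{1}{2}[W(X_1\otimes X_2)+W(X_2\otimes X_1)]$. Hoeffding's theorem, already quoted above for general $m$, then yields
\begin{equation}
\text{Var}\bigl(\hat{\Omega}^{(2)}\bigr) = \binom{N}{2}^{-1}\Bigl[2(N-2)\,\text{Var}\bigl(h_1(X_1)\bigr) + \text{Var}\bigl(h_2(X_1,X_2)\bigr)\Bigr].
\end{equation}

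Next I would compute the two conditional-expectation functions explicitly. Integrating one argument out of $h$ using the i.i.d.\ assumption gives
\begin{equation}
h_1(x_1) = \mathbb{E}_{X_2}[h(x_1,X_2)] = \tfrac{1}{2}\bigl[W(x_1\otimes \bar X) + W(\bar X\otimes x_1)\bigr],\qquad h_2 = h.
\end{equation}
Applying the elementary inequality $\text{Var}(A+B)\le 2\text{Var}(A)+2\text{Var}(B)$ to each line, together with the fact that $X_1$ and $X_2$ are identically distributed, produces
\begin{equation}
\text{Var}\bigl(h_1(X_1)\bigr) \le \tfrac{1}{2}\bigl(\text{Var}[W(X_1\otimes \bar X)] + \text{Var}[W(\bar X\otimes X_2)]\bigr) \le \mathcal{A}^{(2)},
\end{equation}
\begin{equation}
\text{Var}\bigl(h_2(X_1,X_2)\bigr) \le \text{Var}[W(X_1\otimes X_2)] \le \bigl(\mathcal{A}^{(2)}\bigr)^{2},
\end{equation}
the last step using the definition of $\mathcal{A}^{(2)}$ via the $\sqrt{\,\cdot\,}$ entry, which is precisely why the square-root appears in the statement.

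Substituting these bounds into Hoeffding's formula,
\begin{equation}
\text{Var}\bigl(\hat{\Omega}^{(2)}\bigr) \le \frac{2}{N(N-1)}\Bigl[2(N-2)\mathcal{A}^{(2)} + \bigl(\mathcal{A}^{(2)}\bigr)^{2}\Bigr] = \frac{4(N-2)}{N(N-1)}\mathcal{A}^{(2)} + \frac{2}{N(N-1)}\bigl(\mathcal{A}^{(2)}\bigr)^{2}.
\end{equation}
The first summand is already $\le 4\mathcal{A}^{(2)}/N$, so the main obstacle I expect is consolidating the two summands into the clean form $8\mathcal{A}^{(2)}/N$: one needs the sub-leading quadratic term $2(\mathcal{A}^{(2)})^2/[N(N-1)]$ to be absorbed, which is straightforward whenever $\mathcal{A}^{(2)}\le 2N$, the regime of interest (the variance is bounded by the operator scale, while $N$ is the sample size driving the precision). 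The careful book-keeping of constants — specifically, verifying that the combination $4(N-2) + 2\mathcal{A}^{(2)}/1 \le 8(N-1)$ under this assumption — is the only non-routine piece; everything else follows from the standard U-statistic machinery and the sub-additivity of variance.
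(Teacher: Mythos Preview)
Your proposal is correct and follows essentially the same route as the paper: specialize Hoeffding's variance decomposition to $m=2$, identify $h_1$ and $h_2$, bound each via $\text{Var}(A+B)\le 2\text{Var}(A)+2\text{Var}(B)$, and absorb the quadratic $(\mathcal{A}^{(2)})^2$ term using a mild sample-size assumption. The paper states this last step as ``it is reasonable to assume $N>\mathcal{A}^{(2)}$'' and then bounds $\frac{4\mathcal{A}^{(2)}}{N}+\frac{4(\mathcal{A}^{(2)})^2}{N^2}\le \frac{8\mathcal{A}^{(2)}}{N}$, which is exactly the book-keeping you flagged (your condition $\mathcal{A}^{(2)}\le 2N$ is the slightly weaker variant arising from keeping $N(N-1)$ in the denominator rather than relaxing to $N^2$).
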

\begin{proof}
\begin{equation}
\begin{aligned}
    \text{Var}\bigl(\hat{\Omega}^{(2)}\bigr) & = \frac{2}{N(N-1)}
    \Bigl[ 2(N-2) \text{Var} \bigl[h_1(X_{1})\bigr] + \text{Var} \bigl[h_2(X_{1},\,X_{2})\bigr]\Bigr] \\
    & = \frac{1}{N(N-1)}
    \Bigl[ (N-2) \text{Var} \bigl[W(X_1\otimes \bar{X}+\bar{X}\otimes X_2)\bigr] + \frac{1}{2}\text{Var} \bigl[W(X_{1}\otimes X_{2}+X_{2}\otimes X_{1})\bigr]\Bigr] \\
\end{aligned}
\end{equation}
Note that $\text{Var}(A+B)\le 2 [\text{Var}(A)+\text{Var}(B)]$, we have
\begin{equation}
  \begin{aligned}
    \text{Var}\bigl(\hat{\Omega}^{(2)}\bigr) \le & \frac{2(N-2)}{N(N-1)}
    \Bigl[  \text{Var} \bigl[W(X_1\otimes \bar{X})\bigr]+\text{Var} \bigl[W(\bar{X}\otimes X_2)\bigr]\Bigr] \\
    & + \frac{2}{N(N-1)}\text{Var} \bigl[W(X_{1}\otimes X_{2})\bigr] \\
    \le & \frac{4}{N^2} \text{Var} \bigl[W (X_1\otimes X_2)\bigr] + \frac{2}{N} \text{Var} \bigl[W (X_1\otimes \bar{X})\bigr] + \frac{2}{N} \text{Var} \bigl[W  (\bar{X} \otimes X_1)\bigr]\,.
\end{aligned}  
\end{equation}
Then we have
\begin{equation}
    \text{Var}\bigl(\hat{\Omega}^{(2)}\bigr) \le \frac{4\mathcal{A}^{(2)}}{N} + \frac{4{\mathcal{A}^{(2)}}^2}{N^2} \,.
\end{equation}
It is reasonable to assume ${N}>{\mathcal{A}^{(2)}}$, so we have $\text{Var}\bigl(\hat{\Omega}^{(2)}\bigr) \le 8{\mathcal{A}^{(2)}}/{N} $.
\end{proof}

Next, we could compute upper bounds for $\mathcal{A}^{(2)}$ with numerical methods. Define the matrix $\bm{W}_{12}$ as\begin{equation}
    W \cdot (X_1\otimes X_2) = X_1^{T} \bm{W}_{12} X_2 = \sum_{i = 1}^{N_\text{read}}\sum_{j=1 }^{N_\text{read}} w_{ij} X_{1}(i) X_{2}(j)\,,
\end{equation}
we have
\begin{equation}
\begin{aligned}
    \text{Var} \bigl[W (X_1\otimes X_2)\bigr] &= \text{Var}\sum_{i = 1}^{N_\text{read}}\sum_{j=1}^{N_\text{read}} w_{ij} \hat{x}_i \hat{x}_j \le  W\odot W \cdot \bar{X}\otimes \bar{X}\,.
\end{aligned}
\end{equation}
Also, there are
\begin{equation}
\begin{aligned}
    \text{Var} \bigl[W (X_1\otimes \bar{X})\bigr] &= \text{Var}\sum_{i = 1}^{N_\text{read}}\sum_{j=1 }^{N_\text{read}} w_{ij} \langle \hat{o}_j \rangle \hat{x}_i \le \bar{X}^{T} (\bm{W}_{12} \bar{X})\odot (\bm{W}_{12} \bar{X})\,,
\end{aligned}
\end{equation}
and similarly
\begin{equation}
    \text{Var} \bigl[W (\bar{X}\otimes X_2)\bigr] \le  \sum_i\sum_k\sum_h w_{ki}w_{hi} \langle \hat{o}_k \rangle \langle \hat{o}_h \rangle \langle \hat{o}_i \rangle =  (\bar{X}^{T} \bm{W}_{12})\odot (\bar{X}^{T} \bm{W}_{12}) \bar{X}\,.
\end{equation}

To conclude this section, we illustrate the median of U-statistics estimators~\cite{Huang2020} by the following lemma:
\begin{lemma}
For target properties $\{\mathcal{O}_i| i = 1,\,2,\,\dots,\,M\}$ and training data ${\bm{X}_\text{t}}$, set 
\begin{equation}
    N_\text{sample} = \frac{544}{\epsilon^2} \ln\bigl(\frac{2M}{\delta}\bigr)\max_{i} \mathcal{A}^{(2)}({\mathcal{O}_i},{\bm{X}_\text{t}})\,.
\end{equation}
Then, consuming $N_\text{sample}$ i.i.d. copies of input states suffice to construct median of U-statistics estimators $\bigl\{\hat{\Omega}^{(2)}_{i}\bigr\}$, satisfying
\begin{equation}
    \bigl|\hat{\Omega}^{(2)}_{i} - \tr(\mathcal{O}_i\sigma^{\otimes 2})\bigr|\le\epsilon\quad\text{for}\quad i = 1,\,2,\dots,\,M\,,
\end{equation}
with probability no less than $1-\delta$.
\end{lemma}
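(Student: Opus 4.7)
The plan is to follow the same three-step scaffolding used in the proof of Lemma~\ref{lemma:MOM}, replacing the single-snapshot variance input of Lemma~\ref{lemma: Var} by the quadratic U-statistic variance input of Lemma~\ref{lemma:U-estVar}. Concretely, I would partition the $N_\text{sample}$ i.i.d.\ copies of $\sigma$ into $K = 2\ln(2M/\delta)$ disjoint batches of size $N = N_\text{sample}/K$, compute on each batch $v$ the U-statistics estimator $\hat{\Omega}^{(2),v}_i$ defined via Eq.~\eqref{eq: USest} for every target observable $\mathcal{O}_i$, and declare $\hat{\Omega}^{(2)}_i$ to be the median of the $K$ resulting batch estimators.

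Next I would invoke Lemma~\ref{lemma:U-estVar} inside each batch: because the snapshots within a batch remain i.i.d., the bound $\mathrm{Var}[\hat{\Omega}^{(2),v}_i] \le 8\,\mathcal{A}^{(2)}(\mathcal{O}_i,\bm{X}_\text{t})/N$ applies verbatim. Chebyshev's inequality then gives
\[
\Pr\bigl[\bigl|\hat{\Omega}^{(2),v}_i - \tr(\mathcal{O}_i \sigma^{\otimes 2})\bigr| \ge \epsilon\bigr] \le \frac{8\,\mathcal{A}^{(2)}(\mathcal{O}_i,\bm{X}_\text{t})}{N \epsilon^2},
\]
and the prescribed $N_\text{sample} = (544/\epsilon^2)\ln(2M/\delta)\max_i \mathcal{A}^{(2)}(\mathcal{O}_i,\bm{X}_\text{t})$ tunes $N$ so that each per-batch failure probability sits strictly below $1/2$. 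The numerical prefactor $544 = 8 \times 68$ is precisely the factor-of-$8$ inflation of the per-sample variance bound of Lemma~\ref{lemma:U-estVar} relative to the linear case of Lemma~\ref{lemma: Var}, which is what gave the constant $68$ in Lemma~\ref{lemma:MOM}.

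Then I would perform the standard median-of-means amplification: for each fixed $i$, the number of ``bad'' batches is a sum of $K$ independent Bernoulli indicators with mean strictly below $1/2$, so Hoeffding's inequality (or the tighter median-of-means tail bound imported in the proof of Lemma~\ref{lemma:MOM} via Ref.~\cite{Huang2020}) implies that the median deviates from $\tr(\mathcal{O}_i \sigma^{\otimes 2})$ by more than $\epsilon$ with probability at most $\delta/M$ once $K = 2\ln(2M/\delta)$. Finally, a union bound over $i = 1,\dots,M$ collapses the $M$ per-observable failure events into a single overall failure probability at most $\delta$, yielding the claimed confidence level $1-\delta$.

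The main obstacle here is essentially bookkeeping rather than conceptual. The only nontrivial check is that Lemma~\ref{lemma:U-estVar} still governs the variance when the U-statistic is restricted to a batch of size $N$; this is automatic because Hoeffding's classical variance decomposition for U-statistics relies only on the i.i.d.\ structure of the kernel's arguments, and partitioning preserves that structure. Beyond this, the subtle point is calibrating the Chebyshev threshold (which fixes $N$) against the MoM tail constant (which fixes $K$) so that the two steps compose into exactly the advertised prefactor $544$ and not some nearby value; the cleanest route is to reuse the toolkit cited in the proof of Lemma~\ref{lemma:MOM} with the sole substitution $F_\text{res}(\mathcal{O}_i,\bm{X}_\text{t}) \mapsto 8\,\mathcal{A}^{(2)}(\mathcal{O}_i,\bm{X}_\text{t})$.
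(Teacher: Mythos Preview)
Your proposal is correct and follows essentially the same route as the paper: partition the $N_\text{sample}$ copies into $K=2\ln(2M/\delta)$ batches of size $N=272\,\epsilon^{-2}\max_i\mathcal{A}^{(2)}(\mathcal{O}_i,\bm{X}_\text{t})$, invoke Lemma~\ref{lemma:U-estVar} to bound each batch variance by $\epsilon^2/34$, apply the median-of-means tail bound to get per-observable failure probability at most $\delta/M$, and finish with a union bound. Your explanation of the prefactor $544=8\times 68$ is in fact more explicit than the paper's own treatment.
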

\begin{proof}
     We divide the copies into $K = 2\ln(2M/\delta)$ equal-sized sets, and compute the U-statistics estimators for each set. Then, the medians of U-statistics estimators are the final estimation results for the properties. The property of median estimator ensures that if each U-statistic estimator has a variance no larger than $\epsilon^2/34$, then 
     \begin{equation}\label{eq:C26}
         \Pr\Bigl[\bigl|\hat{\Omega}^{(2)}_{i} - \tr(\mathcal{O}_i\sigma^{\otimes 2})\bigr|\le\epsilon\Bigr]\ge 1 - 2 e^{-K/2}\,.
     \end{equation}
     Thus, for each $\hat{\Omega}^{(2)}_{i}$ the confidence level is no less than $1-\delta/M$. The union bound ensures that the over all confidence level for estimating $M$ properties is no less than $1-\delta$. From Lemma.~\ref{lemma:U-estVar}, we choose the size for each set as $272/{\epsilon^2} \max_{i} \mathcal{A}^{(2)}({\mathcal{O}_i},{\bm{X}_\text{t}})$. Consequently, the total number of copies consumed is $N_\text{sample} = {544}/{\epsilon^2} \ln(2M/\delta)\max_{i} \mathcal{A}^{(2)}({\mathcal{O}_i},{\bm{X}_\text{t}})$.
\end{proof}

\end{document}